\newcommand{\R}{{\mathbb R}}
\newcommand{\F}{{\mathbb F}}
\newcommand{\C}{{\mathbb C}}
\newcommand{\PR}[1]{{\mathbb{P}}\left\{ #1\right\}}
\newcommand{\EE}{\mathbb{E}}
\newcommand{\inabs}[1]{\left|#1\right|}
\newcommand{\ind}[1]{\ensuremath{\mathbf{1}_{#1}}}
\newcommand{\conv}{\operatorname{conv}}
\newcommand{\poly}{\mathrm{poly}}
\newcommand{\rank}{\ensuremath{\operatorname{rank}}}
\newcommand{\ip}[2]{\ensuremath{\left\langle #1,#2\right\rangle}}
\newcommand{\inset}[1]{\left\{#1\right\}}
\newcommand{\inparen}[1]{\left(#1\right)}
\newcommand{\suchthat}{\,:\,}
\newcommand{\eps}{\varepsilon}
\newcommand{\vphi}{\varphi}
\newtheorem{theorem}{Theorem} 
\newtheorem{lemma}[theorem]{Lemma} 
\newtheorem{definition}[theorem]{Definition} 
\newtheorem{obs}[theorem]{Observation} 
\newtheorem{cor}[theorem]{Corollary} 
\newtheorem{remark}{Remark}
\newtheorem{proposition}[theorem]{Proposition}
\newcommand{\Arow}{f}
\newcommand{\Cproblemma}{C_0}
\newcommand{\Cvar}{C_1}
\newcommand{\Clt}{C_2}
\newcommand{\Ctmp}{C_3}
\newcommand{\bigrho}{\ensuremath{\inparen{ 1 - \nicefrac{1}{q} }\inparen{ 1- \eps } } }
\newcommand{\RM}{\mathrm{RM}}
\title{On the list decodability of random linear codes with large error rates}
\author{Mary Wootters
\thanks{University of Michigan, Ann Arbor. \texttt{wootters@umich.edu}. This work was supported by NSF CCF-1161233.}}
\begin{document}
\maketitle

\begin{abstract}
It is well known that a random $q$-ary code of rate $\Omega(\eps^2)$ is list decodable up to radius $(1 - 1/q - \eps)$ with list sizes on the order of $1/\eps^2$, with probability $1 - o(1)$.  However, until recently, a similar statement about random \em linear \em codes has until remained elusive.  In a recent paper, Cheraghchi, Guruswami, and Velingker show a connection between list decodability of random linear codes and the Restricted Isometry Property from compressed sensing, and use this connection to prove that a random linear code of rate $\Omega( \eps^2 /\log^3(1/\eps))$ achieves the list decoding properties above, with constant probability.  We improve on their result to show that in fact we may take the rate to be $\Omega(\eps^2)$, which is optimal, and further that the success probability is $1 - o(1)$, rather than constant.  As an added benefit, our proof is relatively simple.  
Finally, we extend our methods to more general ensembles of linear codes.  As an example, we show that randomly punctured Reed-Muller codes have the same list decoding properties as the original codes, even when the rate is improved to a constant.
\end{abstract}

\section{Introduction}

In the theory of error correcting codes, one attempts to obtain subsets (codes) $\mathcal{C} \subset [q]^n$ which are simultaneously large and ``spread out."  If the rate of the code $R = \log_q|\mathcal{C}|/n$ is large, then each codeword $c \in \mathcal{C}$ contains a large amount of information.  On the other hand, if the distance between any two codewords is large, then even if a codeword becomes corrupted, say, a fraction $\rho$ of its entries are changed, the original codeword may be uniquely recovered.
There is a trade-off between the rate and distance, and sometimes this trade-off can be too harsh: it is not always necessary to recover exactly the intended codeword $c$, and sometimes suffices to recover a short list of $L$ codewords.  This relaxed notion, called \em list decoding, \em  was introduced in the 1950's by Elias~\cite{elias} and Wozencraft~\cite{wozen}.  More formally, a code $\mathcal{C}$ is \em $(\rho,L)$-list decodable \em if, for any received word $w$, there are at most $L$ other codewords within relative distance $\rho$ of $w$.

We will be interested in the list decodability of random codes, and in particular random linear codes.  A \em linear code \em of rate $R$ in $\F^n_q$ is a code which forms a linear subspace of $\F^n_q$ of dimension $k = Rn$.  Unless otherwise noted, a \em random \em linear code of rate $R$ will be a uniformly random linear code, where $\mathcal{C}$ is a uniformly random $k$-dimensional linear subspace of $\F_q^n$.

Understanding the trade-offs in list decoding is interesting not just for communication, but also for a wide array of applications in complexity theory.  
List decodable codes can be used for hardness amplification of boolean functions and for constructing hardcore predicates from one-way functions,
and they can be used to construct randomness extractors, expanders, and pseudorandom generators.  (See the surveys~\cite{sudan,vadhan} for these and many more applications).  
Understanding the behavior of linear codes, and in particular random linear codes, is also of interest: decoding a random linear code is related to they problem of learning with errors, a fundamental problem in both learning theory~\cite{bkw,fgkp} and cryptography~\cite{regev:05}.

In this work, we show that for large error rates $\rho$, a random linear code has the optimal list decoding parameters, improving upon the recent result of Cheraghchi, Guruswami, and Velingker~\cite{che:2012}.  Our result establishes the existence of such codes, previously unknown for $q > 2$.  We extend our results to other (not necessarily uniform) ensembles of linear codes, including random families obtained from puncturing Reed-Muller codes.

\subsection{Related Work}
In this paper, we will be interested in large error rates $\rho = \bigrho$, for small $\eps$.
Since a random word $r \in \F_q^n$ will disagree with any fixed codeword on a $1 - 1/q$ fraction of symbols in expectation, this is the largest error rate we can hope for.
This large-$\rho$ regime is especially of interest for applications in complexity theory, so we seek to understand the trade-offs between the achievable rates and list sizes, in terms of $\eps$.

When $\rho$ is \em constant\em, Guruswami, H{\aa}stad, and Kopparty \cite{ghk11} show that a random linear code of
rate $1 - H_q(\rho) - C_{\rho, q}/L$ is $(\rho, L)$-list decodable, where
$H_q(x) = x\log_q(q-1) - x\log_q(x) - (1 - x)\log_q(1-x)$ is the $q$-ary
entropy.  This matches lower bounds of Rudra and Guruswami-Narayanan \cite{rud11, GN12}.  
However, for $\rho = \bigrho$, the constant $C_{\rho, q}$ depends exponentially on $\eps$, and this result quickly degrades. 

When $\rho = \bigrho$, it follows from a straightforward computation that a random (not necessarily linear) code of rate $\Omega(\eps^2)$ is $\inparen{\bigrho, O(1/\eps^2)}$-list decodable.
However, until recently, the best upper bounds known for random linear codes with rate $\Omega(\eps^2)$ had list sizes exponential in $1/\eps$~\cite{zp82}; closing this exponential gap between random linear codes and general random codes was posed by~\cite{elias91}.
The existence of a binary linear code with rate $\Omega(\eps^2)$ and list size $O(1/\eps^2)$ was shown in~\cite{ghsz:2002}.  However, this result only holds for binary codes, and further the proof does not show that most linear codes have this property.
Cheraghchi, Guruswami, and Velingker (henceforth CGV) recently made substantial progress on closing the gap between random linear codes and general random codes.  Using a connection between list decodability of random linear codes and the Restricted Isometry Property (RIP) from compressed sensing, they proved the following theorem. 
\begin{theorem}\label{thm:theirthm}[Theorem 12 in \cite{che:2012}]
Let $q$ be a prime power, and let $\eps, \gamma > 0$ be constant parameters. Then for
all large enough integers $n$, a random linear code $\mathcal{C} \subseteq \F_q^n$ of rate $R$, for some
\[ R \geq C \frac{ \eps^2 }{ \log(1/\gamma) \log^3(q/\eps) \log(q) } \]
is $(\bigrho, O(1/\eps^2))$-list decodable with probability at least $1 - \gamma$.
\end{theorem}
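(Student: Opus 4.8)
To prove the stated bound --- and in fact the stronger statement claimed in the abstract, with rate $\Omega(\eps^2)$ and success probability $1-o(1)$ --- the plan is to argue directly and combinatorially rather than through the restricted isometry property. Model the random code by a uniformly random generator matrix $G\in\F_q^{k\times n}$ with $k=Rn$, whose columns $g_1,\dots,g_n$ are i.i.d.\ uniform, so that a message $x\in\F_q^k$ encodes to the word with $j$-th coordinate $\ip{x}{g_j}$. The code fails to be $(\rho,L)$-list decodable precisely when some $w\in\F_q^n$ has $L+1$ distinct codewords within Hamming distance $\rho n$. The first step is to remove the (exponentially many) centers: if distinct codewords $c_0,\dots,c_L$ all lie within $\rho n$ of $w$, then summing these $L+1$ distance bounds and swapping the order of summation gives $\sum_{j=1}^n \#\{i:c_{i,j}=w_j\}\ge(L+1)(1-\rho)n$, and since $\#\{i:c_{i,j}=w_j\}$ is at most $\mathrm{plur}_j$, the size of the largest block of equal entries in column $j$ of the $(L+1)\times n$ matrix with rows $c_0,\dots,c_L$, this forces $\sum_{j=1}^n\mathrm{plur}_j\ge(L+1)(1-\rho)n$. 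So it is enough to bound the probability that \emph{some} $(L+1)$-tuple of distinct messages yields a codeword matrix with $\sum_j\mathrm{plur}_j\ge(L+1)(1-\rho)n$, with no reference to $w$ at all.

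Next, union-bound over the at most $q^{k(L+1)}$ ordered tuples of distinct messages. Fixing a tuple and letting $X$ be the $(L+1)\times k$ matrix with these messages as rows, column $j$ of the codeword matrix is $Xg_j$; hence the columns are i.i.d.\ and each is uniform on the subspace $V=\mathrm{colspace}(X)\subseteq\F_q^{L+1}$. Thus $\sum_j\mathrm{plur}_j$ is a sum of $n$ i.i.d.\ bounded terms, and we want its upper tail at level $(L+1)(1-\rho)n$. The subspace $V$ is always \emph{separating}: because the rows of $X$ are distinct, every pair of coordinates of a uniform $v\in V$ disagrees with probability exactly $1-1/q$, so in particular $V$ is not contained in the span of the all-ones vector. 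The two ingredients I would then establish, uniformly over all such $V$, are: (a) an expectation bound $\EE[\mathrm{plur}_j]\le(L+1)(1-\rho)-\Omega(\eps L)$, which --- using $1-\rho=1/q+(1-1/q)\eps$, and that $\EE[\mathrm{plur}_j]$ is largest in the ``generic'' case $V=\F_q^{L+1}$ (where it is the largest of $q$ nearly balanced multinomial counts, hence $\le(L+1)/q+O(\sqrt L)$ for constant $q$) --- forces the list size $L=\Omega(1/\eps^2)$; and (b) a concentration bound showing that $\mathrm{plur}_j$ deviates from its mean on the scale of its standard deviation $O(\sqrt L)$, not on the scale of its range $L$. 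Given (a) alone, even a crude Hoeffding bound gives failure probability $q^{k(L+1)}\exp(-\Omega(\eps^2 n))$, which is good enough for rate $\Omega(\eps^4)$; feeding in (b) sharpens the exponent to $q^{k(L+1)}\exp(-\Omega(\eps^2 L n))=q^{k(L+1)}\exp(-\Omega(n))$, which is $o(1)$ --- and below $\gamma$ for all large $n$ --- as soon as the constant in $R=\Omega(\eps^2)$ is small enough relative to the one in $L=\Theta(1/\eps^2)$. Since the argument only uses that the codeword matrices have i.i.d.\ separating columns, the same scheme then applies to other ensembles, e.g.\ randomly punctured Reed--Muller codes.

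I expect the technical heart to be ingredient (b): obtaining concentration for $\mathrm{plur}_j$ on the variance scale $\sqrt L$, \emph{uniformly} over every separating subspace $V$. For the generic $V$ this is a routine binomial computation, but one must control the degenerate, low-dimensional $V$'s, where many of the messages are forced to be scalar multiples of one another; the saving grace is that a line through the origin carries only $q-1$ nonzero points, which caps how much such coincidences can inflate the variance of each value-count $N_a=\#\{i:v_i=a\}$, but turning this into a clean tail bound --- organized, say, by the rank of $X$, so that small-rank $X$ is shown either harmless or too rare to matter in the union bound --- is the delicate part. This is also exactly the step where routing through a black-box RIP estimate for a $q$-ary structured matrix costs the extra $\log^3(q/\eps)$ factor, so sidestepping it is what recovers the optimal rate.
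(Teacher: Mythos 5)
Your step (a) is not where the trouble lies---in fact it admits a clean uniform proof you don't quite give: since any two distinct messages collide in a fixed coordinate with probability exactly $1/q$, one has $\EE\sum_a (N_a-\tfrac{L+1}{q})^2=(L+1)(1-\tfrac1q)$, and pointwise $\mathrm{plur}_j-\tfrac{L+1}{q}\le\bigl(\sum_a (N_a-\tfrac{L+1}{q})^2\bigr)^{1/2}$, so $\EE[\mathrm{plur}_j]\le\tfrac{L+1}{q}+\sqrt{L+1}$ for \emph{every} separating $V$ (no appeal to the unjustified claim that the generic $V$ is worst). The genuine gap is ingredient (b): uniform concentration of $\mathrm{plur}_j$ at scale $\sqrt{L}$ in the Chernoff-exponent sense you need is simply false. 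Take the tuple consisting of all $q^r\approx\tfrac34(L+1)$ messages of an $r$-dimensional subspace $S\subseteq\F_q^k$ together with $\approx\tfrac14(L+1)$ generic messages. With probability $q^{-r}=\Theta(1/L)$ the column $g_j$ lies in $S^{\perp}$, and then all $\approx\tfrac34(L+1)$ codewords from $S$ have symbol $0$ in coordinate $j$, so $\mathrm{plur}_j\ge\tfrac34(L+1)$, whereas typically $\mathrm{plur}_j\approx\tfrac{L+1}{q}$. Thus $\mathrm{plur}_j$ has variance $O(L)$ (consistent with (a)) but only a Chebyshev-type tail: jumps of size $\Theta(L)$ with probability $\Theta(1/L)$. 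For this tuple the proxy event $\sum_j\mathrm{plur}_j\ge(L+1)(1-\rho)n$ is essentially the event that $\gtrsim\eps n$ of the $n$ columns fall in $S^\perp$, a binomial with mean $\Theta(\eps^2 n)$, whose probability is $\exp\bigl(-\Theta(\eps\log(1/\eps)\,n)\bigr)$, not $\exp(-\Omega(n))$; and there are $\approx q^{k(L+1)/4}=\exp(\Theta(n))$ such tuples at rate $\Theta(\eps^2)$, so the union bound does not close unless $k\lesssim\eps^3\log(1/\eps)\,n/\log q$. Your proposed rescue---stratifying by the rank of $X$---does not address this, because these bad tuples have rank $\Theta(L)$: the obstruction is a low-dimensional subspace \emph{contained in} the tuple, not low rank of the tuple, and for such tuples the proxy event is genuinely this likely, so no sharper tail bound can save it; one would have to stop discarding the center $w$ (being jointly close to a single $w$ is far more restrictive than the plurality relaxation records).

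So as written the scheme tops out around rate $\widetilde{\Omega}(\eps^3)$---better than the $\eps^4$ you get from Hoeffding, but short even of the stated $\eps^2/\mathrm{polylog}$ bound for small $\eps$, let alone the $\Omega(\eps^2)$ you aim for. This is exactly the known obstruction (linear dependencies among list elements) that limits Guruswami--H{\aa}stad--Kopparty-style direct counting in the regime $\rho=\bigrho$, which the paper cites as the reason that approach degrades here. The paper avoids the tuple union bound altogether: it passes to the simplex encoding $\Phi=\vphi(\mathcal{C})$, shows that $\frac{1}{L}\max_{x\in\Sigma_L}\|\Phi x\|_1<(q-1)n\eps$ suffices for $(\bigrho,L-1)$-list decodability, and controls this quantity uniformly over all $L$-sparse $x$ at once via symmetrization and a Gaussian mean-width bound for the polytope $\Phi B_1^N$ (whose extreme points are just the $\pm\vphi(c)$), which is what delivers rate $\Omega(\eps^2/\log q)$ with probability $1-o(1)$. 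Note also that the statement you were asked to prove is the quoted CGV theorem, which the paper itself does not reprove but subsumes via this route.
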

It is known that the
rate cannot exceed $O(\eps^2)$ (this follows from the list decoding capacity theorem).  Further, the recent lower bounds of Guruswami and Vadhan~\cite{gv10}
and Blinovsky~\cite{bli05,bli08} show that the list size $L$ must be at least
$\Omega_q(1/\eps^2)$.
Thus, Theorem \ref{thm:theirthm} has nearly optimal dependence on $\eps$, leaving a polylogarithmic gap.

\subsection{Our contributions}
The extra logarithmic factors in the result of CGV stem from the difficulty in proving that the RIP is likely to hold for randomly subsampled Fourier matrices.  Removing these logarithmic factors is considered to be a difficult problem.  In this work, we show that while the RIP is a sufficient condition for list decoding, it may not be necessary.  We formulate a different sufficient condition for list decodability: while the RIP is about controlling the $\ell_2$ norm of $\Phi x$, for a matrix $\Phi$ and a sparse vector $x$ with $\|x\|_2 = 1$, our sufficient condition amounts to controlling the $\ell_1$ norm of $\Phi x$ with the same conditions on $x$. 
Next, we show, using techniques from high dimensional probability, that this condition does hold with overwhelming probability for random linear codes, with no extra logarithmic dependence on $\eps$.
The punchline, and our main result, is the following theorem.
\begin{theorem}\label{thm:mainthm}
Let $q$ be a prime power, and fix $\eps > 0$. Then for
all large enough integers $n$, a random linear code $\mathcal{C} \subseteq \F_q^n$ of rate $R$, for 
\[ R \geq C \frac{ \eps^2 }{ \log(q) } \]
is $( \bigrho, O(1/\eps^2))$-list decodable with probability at least $1 - o(1)$.
Above, $C$ is an absolute constant.
\end{theorem}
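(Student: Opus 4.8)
The plan is to reduce the list-decodability statement to a statement about random matrices, following the CGV connection but replacing their RIP-type hypothesis with an $\ell_1$-based condition that is easier to verify without logarithmic losses. Concretely, a linear code $\mathcal{C}$ of dimension $k$ in $\F_q^n$ fails to be $(\rho, L)$-list decodable exactly when there is a received word $w$ and $L+1$ codewords within relative distance $\rho$. As in CGV and \cite{ghk11}, it suffices by a union bound over the (at most $q^{(L+1)k}$ many) choices of $L+1$ message tuples to control, for a fixed set of $L+1$ nonzero codewords, the probability that they all lie in a Hamming ball of radius $\rho n$. The key reformulation: encode membership in a ball via an indicator / sign embedding so that ``all $L+1$ codewords are close to some common $w$'' becomes a statement that a certain $n \times (L+1)$ matrix $\Phi$, whose columns are the embedded codewords (a random $\{0,\dots,q-1\}$-valued object because the code is random), maps the relevant sparse combination to a vector of small $\ell_1$ norm. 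I would first isolate, as a clean deterministic lemma (which the paper presumably states just before this theorem), the implication ``$\ell_1$-spreading of $\Phi$ on sparse vectors $\implies$ list decodability,'' and then spend the bulk of the proof showing the random $\Phi$ has this spreading property.

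First I would set $L = O(1/\eps^2)$ and $k = Rn$ with $R = C\eps^2/\log q$, and fix the target failure event. The main probabilistic step is to show that for $\Phi$ the (appropriately normalized) random matrix whose columns are i.i.d.\ uniform codewords — equivalently, $\Phi = G^\top$ for $G$ a random generator matrix with i.i.d.\ uniform $\F_q$ entries, after the embedding into $\R$ — we have $\norm{\Phi x}_1 \geq (1-\nicefrac1q)(1-\eps)\,$(appropriate scale)$\cdot \norm{x}$ simultaneously for all $x$ in the relevant low-dimensional / sparse set, except with probability $o(1)$. For a single fixed $x$, $\norm{\Phi x}_1$ is a sum of $n$ i.i.d.\ bounded random variables, so its expectation is $\Theta(n)$ and Hoeffding/Bernstein gives failure probability $\exp(-\Omega(\eps^2 n))$. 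The union bound is then over a net of the sparse/low-dimensional parameter set: there are about $q^{Lk} = q^{O(\eps^2 n / \log q)} = 2^{O(\eps^2 n)}$ relevant configurations, and choosing the constant $C$ small enough makes $2^{O(\eps^2 n)} \cdot \exp(-\Omega(\eps^2 n)) = o(1)$. This is exactly where the $\eps^2$ rate (with no extra logs) comes from, and why the earlier RIP route lost factors: controlling $\ell_2$ of $\Phi x$ on all sparse $x$ via a net requires either finer chaining or a union bound that costs $\log(1/\eps)$ factors, whereas the $\ell_1$ quantity concentrates well enough at the crude Hoeffding level.

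The technically delicate part — and what I expect to be the main obstacle — is handling the union bound correctly: we cannot simply union-bound over all $x$ in a net of a continuous sphere, because the relevant set of "bad directions" is not finite. The right move, as in \cite{ghk11}, is to reduce to a finite set of configurations by observing that "$L+1$ codewords in a common ball of radius $\rho n$" is witnessed by the actual codewords (which take finitely many values once $G$ is fixed) together with a center $w$ that can be taken coordinate-wise to be a plurality vote; one then discretizes over the combinatorial pattern of agreements rather than over $w$ itself. Getting the counting tight here — so that the number of patterns times the single-configuration tail bound is $o(1)$ with rate exactly $\Omega(\eps^2/\log q)$ and list size exactly $O(1/\eps^2)$ — requires care with the constants relating $\eps$, $L$, and the concentration exponent, and with the $q$-ary entropy bookkeeping so that the $\log q$ in the denominator is genuinely necessary and sufficient. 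A secondary subtlety is ensuring the high-probability bound survives for all large $n$ (not just in expectation), which the $\exp(-\Omega(\eps^2 n))$ tail handles as long as $n$ is large enough relative to $1/\eps$. Once the spreading lemma is established with probability $1 - o(1)$, the deterministic implication lemma closes the argument immediately, and the improvement over CGV — optimal rate, high probability, simple proof — falls out.
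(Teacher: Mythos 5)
Your high-level frame (a deterministic lemma saying that an $\ell_1$-type bound on $\Phi x$ over sparse $x$ implies list decodability, plus a probabilistic verification for random $G$) matches the paper's structure, but the probabilistic step as you describe it has a genuine gap, and it is exactly the point where the problem is hard. You propose a union bound over the $\approx q^{kL}$ choices of $L$-tuples of messages, with a per-tuple Hoeffding bound of $\exp(-\Omega(\eps^2 n))$. The per-tuple tail is indeed of that order (each of the $n$ columns of $G$ contributes an independent bounded term), but your count of configurations is wrong: with $L = \Theta(1/\eps^2)$ and $k = C\eps^2 n/\log q$ you have $q^{kL} = 2^{\Theta(kL\log q)} = 2^{\Theta(n)}$, not $2^{O(\eps^2 n)}$ --- in your displayed computation the factor $L$ has been dropped. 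Hence the union bound gives $2^{\Theta(Cn)}\cdot e^{-\Omega(\eps^2 n)}$, which is not $o(1)$ for any absolute constant $C$; making it work forces $k \lesssim \eps^2 n/(L\log q) = O(\eps^4 n/\log q)$, i.e.\ rate $O(\eps^4/\log q)$, which is precisely the suboptimal regime that naive tuple-counting (and Johnson-type) arguments already reach. There is no way to fix this by just tightening constants or by the plurality-vote discretization of the center $w$ (that only controls the union over $w$, which is not the bottleneck).

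The paper's proof avoids any union bound over $L$-tuples. After the deterministic reduction (Lemma \ref{lem:sufficient}, where the center $w$ is disposed of by H\"older since $\|\vphi(w)\|_\infty = 1$ --- note the needed inequality is an \emph{upper} bound $\frac1L\max_{x\in\Sigma_L}\|\Phi x\|_1 < (q-1)n\eps$, not the lower ``spreading'' bound you state), it bounds the expectation term by Cauchy--Schwarz and the near-orthogonality \eqref{eq:expect}, and controls the deviation \emph{uniformly} over $\Sigma_L$ by symmetrization and Gaussian comparison followed by a convexity argument: since $\Sigma_L \subset L B_1^N$ and $x \mapsto |\ip{g}{\Phi x}|$ is convex, the supremum over $\Sigma_L$ is attained at the $2N(q-1)$ vertices $\pm L\vphi(c)_\alpha$ of the polytope, so one only pays a Gaussian maximum over $\Theta(N)$ points, namely $\sqrt{n\log N}$, rather than over $N^L$ tuples. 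Requiring $\sqrt{n\log N} \lesssim n\eps$ is what yields $k\log q \lesssim n\eps^2$, i.e.\ the optimal rate with an absolute constant; the $1-o(1)$ success probability comes from running this argument at the level of $p$-th moments with $p \sim \ln N$ and Markov. This extreme-point reduction (or some substitute that likewise avoids the $N^L$-fold union bound) is the missing idea in your proposal.
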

There are three differences between Theorem \ref{thm:theirthm} and Theorem \ref{thm:mainthm}.  First, the dependence on $\eps$ in Theorem \ref{thm:mainthm} is optimal.  Second, the dependence on $q$ is also improved by several log factors.  Finally, the success probability in Theorem \ref{thm:mainthm} is $1 - o(1)$, compared to a constant success probability in Theorem \ref{thm:theirthm}.
As an additional benefit, the proof on Theorem \ref{thm:mainthm} is relatively short, while the proof of the RIP result in~\cite{che:2012} is quite difficult. 


To demonstrate the applicability of our techniques, we extend our approach to
apply to not necessarily uniform ensembles of linear codes.  We formulate a
more general version of Theorem \ref{thm:mainthm}, and give examples of codes
to which it applies.  Our main example is linear codes $\mathcal{E}$ of
rate $\Omega(\eps^2)$ whose generator matrix is chosen by randomly sampling the columns of 
a generator matrix of a linear code $\mathcal{C}$ of nonconstant rate.
Ignoring details about
repeating columns, $\mathcal{E}$ can be viewed as randomly punctured version
of $\mathcal{C}$.  
Random linear codes fit into this framework when $\mathcal{C}$ is taken to be $\RM_q(1,k)$,
the $q$-ary Reed-Muller code of degree one and dimension $k$.
We extend this in a natural way by taking $\mathcal{C} =
\RM(r,m)$ to be any (binary) Reed-Muller code.  It has recently been
shown~\cite{gkz,klp} that $\RM(r,m)$ is list-decodable up to $1/2 - \eps$, with
exponential but nontrivial list sizes.  However, $\RM(r,m)$ is not a ``good"
code, in the sense that it does not have constant rate.  In the same spirit as our
main result, we show that when $\RM(r,m)$ is punctured down to rate $O(\eps^2)$,
with high probability the resulting code is list decodable up to 
radius $1/2 - \eps$ with asymptotically no loss in list size.

\subsection{Our approach}
The CGV proof of Theorem \ref{thm:theirthm} proceeds in three steps.  
The first step is to prove an average case Johnson bound---that is, a sufficient condition for list decoding that depends on the \em average \em pairwise distances between codewords, rather than the worst-case differences.  
The second step is a translation of the coding theory setting to a setting suitable for the RIP: a code $\mathcal{C}$ is encoded as a matrix $\Phi$ whose columns correspond to codewords of $\mathcal{C}$.  This encoding has the property that if $\Phi$ had the RIP with good parameters, then $\mathcal{C}$ is list decodable with similarly good parameters.  Finally, the last and most technical step is proving that the matrix $\Phi$ does indeed have the Restricted Isometry Property with the desired parameters.

In this work, we use the second step from the CGV analysis (the encoding from codes to matrices), but we bypass the other steps.  While both the average case Johnson bound and the improved RIP analysis for Fourier matrices are clearly of independent interest, our analysis will be much simpler, and obtains the correct dependence on $\eps$. 

\subsection{Organization}
In Section \ref{sec:setup}, we fix notation and definitions, and also introduce the simplex encoding map from the second step of the CGV analysis.  In Section \ref{sec:sufficient}, we state our sufficient condition and show that it implies list decoding, which is straightforward.  We take a detour in Section \ref{ssec:RIP} to note that the sufficiency of our condition in fact implies the sufficiency of the Restricted Isometry Property directly, providing an alternative proof of Theorem 11 in~\cite{che:2012}.  In Section \ref{sec:tech} we prove that our sufficient condition holds, and conclude Theorem \ref{thm:mainthm}.  Finally, in Section \ref{sec:gen}, we discuss the generality of our result, and show that it applies to other ensembles of linear codes. 

\section{Definitions and Preliminaries}
\label{sec:setup}
Throughout, we will be interested in linear, $q$-ary, codes $\mathcal{C}$ with length $n$ and size $|\mathcal{C}| = N$. 
We use the notation $[q] = \{0,\ldots, q-1\}$, and for a prime power $q$, $\F_q$ denotes the finite field with $q$ elements.
Nonlinear codes use the alphabet $[q]$, and linear codes use the alphabet $\F_q$.  When notationally convenient, we identify $[q]$ with $\F_q$; for our purposes, this identification may be arbitrary.
We let $\omega = e^{2\pi \mathbf{i} / q}$  denote the primitive $q^{th}$ root of unity, and we use $\Sigma_L \subset \{0,1\}^N$ to denote the space of $L$-sparse binary vectors.
For two vectors $x,y \in [q]^n$, the \em relative Hamming distance \em between them is
\[ d(x,y) = \frac{1}{n} \left| \inset{ i \suchthat x_i \neq y_i } \right|.\]
Throughout, $C_i$ denotes numerical constants.  For clarity, we have made no attempt to
optimize the values of the constants.

A code is list decodable if any received word $w$ does not have too many codewords close to it:
\begin{definition} \label{def:ld}
A code $\mathcal{C} \subseteq [q]^n$ is $(\rho, L)$-list decodable if for all $w \in [q]^n$, 
\[ \left| \inset{ c \in \mathcal{C} \suchthat d(c,w) \leq \rho } \right| \leq L.\]
\end{definition}
A code is \em linear \em if the set
 $\mathcal{C}$ of codewords is of the form $\mathcal{C} = \{ xG \mid x \in \F_q^k \}$,
for a $k \times n$ generator matrix $G$.  
We say that $\mathcal{C}$ is a \em random linear code of rate $R$ \em
if the image of the generator matrix $G$ is a random subspace of dimension $k = Rn$.

Below, it will be convenient to work with generator matrices $G$ chosen uniformly at random from $\F_q^{k \times n}$, rather than with random linear subspaces of dimension $k$.  These are not the same, as there is a small but positive probability that $G$ chosen this way will not have full rank.  However, we observe that
\begin{equation}\label{eq:rank}
 \PR{ \rank(G) < k } = \prod_{r=0}^{k-1}\inparen{1 - q^{r-n}} = 1 - o(1).
\end{equation}
Now suppose that $\mathcal{C}$ is a random linear code of rate $R = k/n$, and $\mathcal{C'}$ is a code with a random $k \times n$ generator matrix $G$.  Let $\mathcal{E}$ be the event that $\mathcal{C}$ is $(\rho,L)$-list decodable for some $\rho$ and $L$, and let $\mathcal{E}'$ be the corresponding event for $\mathcal{C}'$.  By symmetry, we have
\begin{align*}
\PR{ \mathcal{E}} &= \PR{ \mathcal{E}' \mid \rank(G) = k }\\
&\geq \PR{ \mathcal{E}' \wedge \rank(G) = k }\\
&\geq 1 - \PR{ \overline{\mathcal{E}'}} - \PR{ \rank(G) < k }\\
&= \PR{\mathcal{E}'} - o(1),
\end{align*}
where we have used \eqref{eq:rank} in the final line.
Thus, to prove Theorem \ref{thm:mainthm}, it suffices to show that $\mathcal{C}'$ is list decodable, and so 
going forward we will consider a code $\mathcal{C}$ with a random $k \times n$ generator matrix.  For notational convenience, we will also treat $\mathcal{C} = \inset{ xG \mid x \in \F_q^k }$ as a multi-set, so that in particular we always have $N = |\mathcal{C}| = q^k$.  Because by the above analysis the parameter of interest is now $k$, not $|\mathcal{C}|$, this will be innocuous.

We make use the simplex encoding used in the CGV analysis, which maps the code $\mathcal{C}$ to a complex matrix $\Phi$.
\begin{definition}[Simplex encoding from~\cite{che:2012}] 
\ \newline 
Define a map $\vphi:[q] \to \C^{q-1}$ by $\vphi(x)(\alpha) = \omega^{x\alpha}$ for $\alpha \in \{1,\ldots,q-1\}$.  We extend this map to 
a map $\vphi:[q]^n \to \C^{n(q-1)}$ in the natural way by concatenation.
Further, we extend $\vphi$ to act on sets $\mathcal{C} \subset [q]^n$: $\vphi(\mathcal{C})$ is the $n(q-1) \times N$ matrix whose columns are $\vphi(c)$ for $c \in \mathcal{C}$.
\end{definition}

Suppose that $\mathcal{C}$ is a $q$-ary linear code with random generator matrix $G \in \F_q^{k \times n}$, as above.
Consider the $n \times N$ matrix $M$ which has the codewords as columns.  The rows of this matrix are independent---each row corresponds to a column $t$ of the random generator matrix $G$.  To sample a row $r$, we choose $t \in \F_q^k$ uniformly at random (with replacement), and let $r = (\ip{t}{x})_{x \in \F_q^k}$. 
Let $T$ denote the random multiset with elements in $\F_q^k$ consisting of the draws $t$.
To obtain $\Phi = \vphi(\mathcal{C})$, we replace each symbol $\beta$ of $M$ with its simplex encoding $\vphi(\beta)$, regarded as a column vector.  
Thus, each row of $\Phi$ corresponds to a vector $t \in T$ (a row of the original matrix $M$, or a column of the generator matrix $G$), and an index $\alpha \in \{1,\ldots,q-1\}$ (a coordinate of the simplex encoding).  We denote this row by $\Arow_{t,\alpha}$.  

We use the following facts about the simplex encoding, also from~\cite{che:2012}:
\begin{enumerate}
	\item For $x,y \in [q]^n$, 
		\begin{equation}\label{eq:ip}
		 \ip{\vphi(x)}{\vphi(y)} = (q-1)n - q d(x,y) n.
		\end{equation}
	\item If $\mathcal{C}$ is a linear code with a uniformly random generator matrix, the columns of $\Phi$ are orthogonal in expectation.  That is, for $x,y \in \F_q^n$, indexed by $i,j \in \F_q^k$ respectively,
		we have 
		\begin{align*}
		\EE d(x,y) &= \frac{1}{n} \EE \sum_{t \in T} \ind{\ip{t}{i} \neq \ip{t}{j}} \\
		&= \PR{ \ip{t}{i} \neq \ip{t}{j}} \\
		&= \begin{cases} 1 - \frac{1}{q} & i\neq j \\ 0 & i = j
		\end{cases}
		\end{align*}
		Combined with \eqref{eq:ip}, we have
		\begin{align*}
			\EE \ip{\vphi(x)}{\vphi(y)} &= (q-1)n - qn \,\EE d(x,y) \\
			&=  \begin{cases} (q-1)n & x=y \\0 &x \neq y \end{cases}
		\end{align*}
		This implies that
		\begin{equation}\label{eq:expect}
			\EE \| \Phi x \|_2^2 = \sum_{i,j \in [N] } x_ix_j \EE \ip{\vphi(c_i)}{\vphi(c_j)} = (q-1)n\|x\|^2.
		\end{equation}
\end{enumerate}

\section{Sufficient conditions for list decodability}
\label{sec:sufficient}
Suppose that $\mathcal{C}$ is a linear code as above, and let $\Phi = \vphi(\mathcal{C}) \in \C^{n(q-1) \times N}$ be the complex matrix associated with $\mathcal{C}$ by the simplex encoding.
We first translate Definition \ref{def:ld} into a linear algebraic statement about $\Phi$.  The identity \eqref{eq:ip} implies that $\mathcal{C}$ is $(\rho, L-1)$ list decodable if and only if for all $w \in \F_q^n$, for all sets $\Lambda \subset \mathcal{C}$ with $|\Lambda| = L$, there is at least one codeword $c \in \Lambda$ so that $d(w,c) > \rho$, that is, so that
\[ \ip{ \vphi(c) }{ \vphi(w) } < (q-1)n - q \rho n.\]
Translating the quantifiers into appropriate max's and min's, we observe
\begin{obs}\label{obs:char}
A code $\mathcal{C} \in [q]^n$ is $(\rho, L-1)$-list decodable if and only if
\[ \max_{w \in [q]^n} \max_{\Lambda \subset \mathcal{C}, |\Lambda| = L } \min_{c \in \Lambda} \ip{ \vphi(w)}{ \vphi(c) } < (q-1)n - q \rho n. \]
When $\rho = \bigrho$, $\mathcal{C}$ is $(\rho, L-1)$-list decodable if and only if
\begin{equation}\label{eq:star}
 \max_{w \in [q]^n} \max_{\Lambda \subset \mathcal{C}, |\Lambda| = L } \min_{c \in \Lambda} \ip{ \vphi(w)}{ \vphi(c) } < (q-1)n\eps.
\end{equation}
\end{obs}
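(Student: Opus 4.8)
The plan is to unwind Definition \ref{def:ld} and push it through the distance--inner-product identity \eqref{eq:ip}; no probabilistic input is needed, so this is purely a translation of quantifiers. First I would pass to the contrapositive of the definition: $\mathcal{C}$ fails to be $(\rho, L-1)$-list decodable exactly when there is a word $w \in [q]^n$ and a sub-multiset $\Lambda \subseteq \mathcal{C}$ with $|\Lambda| = L$ such that $d(c,w) \le \rho$ for every $c \in \Lambda$. Negating, $\mathcal{C}$ is $(\rho, L-1)$-list decodable if and only if for every $w$ and every $\Lambda \subseteq \mathcal{C}$ of size $L$ there is at least one $c \in \Lambda$ with $d(c,w) > \rho$.

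Next I would rewrite the per-codeword condition $d(c,w) > \rho$ in terms of the simplex encoding. By \eqref{eq:ip}, $\ip{\vphi(w)}{\vphi(c)} = (q-1)n - q\, d(c,w)\, n$ is a strictly decreasing affine function of $d(c,w)$, so $d(c,w) > \rho$ is equivalent to $\ip{\vphi(w)}{\vphi(c)} < (q-1)n - q\rho n$. Now I would collapse the quantifiers into extrema: ``$\exists\, c\in\Lambda$ with $\ip{\vphi(w)}{\vphi(c)} < B$'' is literally ``$\min_{c\in\Lambda}\ip{\vphi(w)}{\vphi(c)} < B$'' (the minimum over the finite set $\Lambda$ is attained), and since $[q]^n$ and the family of size-$L$ sub-multisets of $\mathcal{C}$ are both finite, ``for all $w$ and all $\Lambda$, $\min_{c\in\Lambda}(\cdots) < B$'' is equivalent to $\max_{w}\max_{\Lambda}\min_{c\in\Lambda}(\cdots) < B$, with $B = (q-1)n - q\rho n$. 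This yields the first displayed characterization in the Observation.

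For the second displayed equivalence I would simply specialize to $\rho = \bigrho$: here $q\rho n = qn\inparen{1-\tfrac1q}(1-\eps) = (q-1)n(1-\eps)$, hence $B = (q-1)n - (q-1)n(1-\eps) = (q-1)n\eps$, which is exactly \eqref{eq:star}. There is essentially no obstacle in this argument; the only points that require care are the off-by-one bookkeeping between list size $L-1$ and forbidding $L$ nearby codewords, the convention that $\Lambda$ should be taken as a sub-multiset since $\mathcal{C}$ is being treated as a multiset, and keeping the direction of the inequality correct when passing through the orientation-reversing map \eqref{eq:ip}.
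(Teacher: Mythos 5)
Your proposal is correct and follows essentially the same route as the paper: the paper's justification (in the paragraph preceding the Observation) likewise negates Definition \ref{def:ld}, converts $d(c,w) > \rho$ into $\ip{\vphi(w)}{\vphi(c)} < (q-1)n - q\rho n$ via \eqref{eq:ip}, and then rewrites the quantifiers as the max--max--min, with the same arithmetic $q\rho n = (q-1)n(1-\eps)$ giving \eqref{eq:star}. Your extra care about the multiset convention and the $L$ versus $L-1$ bookkeeping is consistent with the paper's conventions and introduces no discrepancy.
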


We seek sufficient conditions for \eqref{eq:star}.  Below is the one we will find useful:
\begin{lemma}\label{lem:sufficient}
Let $\mathcal{C} \in \F_q^n$ be a $q$-ary linear code, and let $\Phi = \vphi(\mathcal{C})$ as above. Suppose that 
\begin{equation}\label{eq:doublestar}
	\frac{1}{L} \max_{x \in \Sigma_{L}} \|\Phi x\|_1 < (q-1) n\eps.
\end{equation}
Then \eqref{eq:star} holds, and hence $\mathcal{C}$ is $(\bigrho, L-1)$-list decodable.
\end{lemma}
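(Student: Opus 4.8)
The plan is to run the classic ``the minimum is at most the average'' argument, exploiting the one structural feature of the simplex encoding that has not yet been used in Section~\ref{sec:sufficient}: every entry of $\vphi(w)$ has modulus $1$. This lets us bound a worst-case inner product by an $\ell_1$ quantity (via H\"older) rather than an $\ell_2$ quantity (via Cauchy--Schwarz), which is exactly the object that hypothesis \eqref{eq:doublestar} controls.

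Concretely, fix an arbitrary received word $w \in [q]^n$ and an arbitrary $\Lambda \subseteq \mathcal{C}$ with $|\Lambda| = L$, and let $x = \ind{\Lambda} \in \Sigma_{L}$ be its indicator vector, so that $\Phi x = \sum_{c \in \Lambda} \vphi(c)$. First I would replace the minimum by the average:
\[ \min_{c \in \Lambda} \ip{\vphi(w)}{\vphi(c)} \;\le\; \frac{1}{L}\sum_{c \in \Lambda} \ip{\vphi(w)}{\vphi(c)} \;=\; \frac{1}{L}\,\ip{\vphi(w)}{\Phi x} ; \]
this step is legitimate because each term $\ip{\vphi(w)}{\vphi(c)}$ is a real number by \eqref{eq:ip}. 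Next I would apply H\"older's inequality in the form $\inabs{\ip{u}{v}} \le \norm{u}_\infty \norm{v}_1$ with $u = \vphi(w)$: since every coordinate of $\vphi(w)$ is a power of $\omega$ and hence has modulus $1$, we have $\norm{\vphi(w)}_\infty = 1$, so
\[ \min_{c \in \Lambda} \ip{\vphi(w)}{\vphi(c)} \;\le\; \frac{1}{L}\,\inabs{\ip{\vphi(w)}{\Phi x}} \;\le\; \frac{1}{L}\,\norm{\Phi x}_1 \;\le\; \frac{1}{L}\max_{x \in \Sigma_{L}} \norm{\Phi x}_1 \;<\; (q-1)n\eps, \]
where the last inequality is precisely the assumption \eqref{eq:doublestar}. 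Since $w$ and $\Lambda$ were arbitrary, taking the maximum over both gives \eqref{eq:star}, and then Observation~\ref{obs:char} yields that $\mathcal{C}$ is $(\bigrho, L-1)$-list decodable.

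There is no genuine obstacle here; the proof is essentially these two displays. The only points deserving a moment's care are (i) observing that $\ip{\vphi(w)}{\vphi(c)}$ is real-valued, so that passing to the average is meaningful, and (ii) using $\ell_\infty/\ell_1$ duality rather than Cauchy--Schwarz --- this is the step where the unit-modulus structure of the simplex encoding is used, and it is what produces the $\ell_1$ norm $\norm{\Phi x}_1$ that \eqref{eq:doublestar} is designed to bound. All of the real work has thus been deferred to verifying \eqref{eq:doublestar} for random linear codes, which is carried out in Section~\ref{sec:tech}.
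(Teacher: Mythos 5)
Your proof is correct and is essentially the same argument as the paper's: replace the minimum by the average over $\Lambda$, then apply $\ell_\infty/\ell_1$ duality using $\norm{\vphi(w)}_\infty = 1$ to reduce to the bound \eqref{eq:doublestar}. The only cosmetic difference is that you fix $w$ and $\Lambda$ and take maxima at the end, whereas the paper carries the maxima through the chain of inequalities.
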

\begin{proof}
We always have 
\[\min_{c \in \Lambda} \ip{\vphi (w)}{\vphi(c)} \leq \frac{1}{L} \sum_{c \in \Lambda} \ip{\vphi(w)}{\vphi(c)},\] 
so
\begin{align*}
	\max_{w \in [q]^n } \max_{ |\Lambda| = L  }  \min_{c \in \Lambda} \ip{\vphi(w)}{\vphi(c)} &\leq \frac{1}{L} \max_{w \in [q]^n } \max_{|\Lambda| = L } \sum_{c \in \Lambda} \ip{\vphi(w)}{\vphi(c)} \\
&= \frac{1}{L} \max_{w \in [q]^n } \max_{x \in \Sigma_{L}} \vphi(w)^T \Phi x\\
&\leq \frac{1}{L} \max_{w \in [q]^n} \|\vphi(w)\|_\infty \max_{x \in \Sigma_{L}} \|\Phi x \|_1\\
&= \frac{1}{L} \max_{x \in \Sigma_{L}} \|\Phi x \|_1.
\end{align*}
Thus it suffices to bound the last line by $(q-1)n\eps$.
\end{proof}

\subsection{Aside: the Restricted Isometry Property}
\label{ssec:RIP}
A matrix $A$ has the \em Restricted Isometry Property \em (RIP) if, for some constant $\delta$ and sparsity level $s$, 
\[ (1 - \delta) \|x \|_2^2 \leq \|A x \|_2^2 \leq (1 + \delta) \|x\|_2^2 \]
for all $s$-sparse vectors $x$.  The best constant $\delta = \delta( A, k )$ is called the \em Restricted Isometry Constant\em.   
  The RIP is an important quantity in compressed sensing, and much work has gone into understanding it.

CGV have shown that if $\frac{1}{\sqrt{n(q-1)}}\varphi(\mathcal{C})$ has the RIP with appropriate parameters, $\mathcal{C}$ is list decodable.  The proof that the RIP is a sufficient condition follows, after some computations, from an average-case Johnson bound.  While the average-case Johnson bound is interesting on its own, in this section we note that Lemma \ref{lem:sufficient} implies the sufficiency of the RIP immediately.
Indeed, by Cauchy-Schwarz,
\begin{align*}
	\frac{1}{L} \max_{x \in \Sigma_{L}} \|\Phi x\|_1 &\leq \frac{\sqrt{n(q-1)}}{L} \max_{x \in \Sigma_{L}} \|\Phi x\|_2 \\
&\leq \frac{\sqrt{n(q-1)}}{L}\inparen{\sqrt{n(q-1)} (1 + \delta) \max_{x \in \Sigma_L}\|x\|_2 }\\
&\leq \frac{ n(q-1)}{\sqrt{L}} (1 + \delta),
\end{align*}
where $\Phi = \varphi(\mathcal{C})$, and $\delta = \delta(\tilde{\Phi}, L)$ is the restricted isometry constant for $\tilde{\Phi} = \frac{1}{\sqrt{n(q-1)}} \Phi$ and sparsity $L$.
By Lemma \ref{lem:sufficient}, this implies that
\[ \frac{\delta + 1}{\sqrt{L}}  < \eps\]
also implies \eqref{eq:star}, and hence $(\bigrho, L-1)$-list decodability.
Setting $\delta = 1/2$, we may conclude the following statement:
\begin{quote}
For any code $\mathcal{C} \subset [q]^n$, if  $\frac{1}{\sqrt{n(q-1)}}\vphi(\mathcal{C})$ has the RIP with contant $1/2$ and sparsity level $L$, 
then $\mathcal{C}$ is $\inparen{ \inparen{1 - \nicefrac{1}{q}}\left(1  - \nicefrac{3}{2\sqrt{L}}\right), L-1}$-list decodable.
\end{quote}
This precisely recovers Theorem 11 from~\cite{che:2012}.


\section{A random linear code is list decodable}
\label{sec:tech}
We wish to show that, when $\Phi = \vphi(\mathcal{C})$ for a random linear code $\mathcal{C}$, \eqref{eq:doublestar} holds with high probability.  Thus, we need to bound $\max_{x \in \Sigma_L} \| \Phi x\|_1$.  We write
\begin{equation}\label{eq:decomp}
\max_{x \in \Sigma_L} \|\Phi x\|_1 \leq \max_{x \in \Sigma_L} \EE \| \Phi x\|_1 + \max_{x \in \Sigma_L}\inabs{ \| \Phi x\|_1 - \EE\| \Phi x\|_1 },
\end{equation}
and we will bound each term separately.  
First, we observe that $\EE \| \Phi x\|_1$ is correct.  


\begin{lemma}\label{lem:randexp}
Let $\mathcal{C} \subset \F_q^n$ be a linear $q$-ary code with a random generator matrix.   Let $\Phi = \vphi(\mathcal{C})$ as above.  Then for any $x \in \Sigma_L$, 
\[ \frac{1}{L} \EE \| \Phi x\|_1 \leq \frac{n(q-1)}{\sqrt{L}}.\]
\end{lemma}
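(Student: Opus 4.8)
The plan is to avoid any direct reasoning about $\ell_1$ norms and instead reduce to the second-moment identity \eqref{eq:expect} that is already in hand. Since $\Phi x$ is a vector in $\C^{n(q-1)}$, Cauchy--Schwarz gives $\|\Phi x\|_1 \le \sqrt{n(q-1)}\,\|\Phi x\|_2$ for every realization of the random generator matrix. Taking expectations and then applying Jensen's inequality to the concave function $\sqrt{\cdot}$ yields $\EE\|\Phi x\|_1 \le \sqrt{n(q-1)}\,\bigl(\EE\|\Phi x\|_2^2\bigr)^{1/2}$. By \eqref{eq:expect}, $\EE\|\Phi x\|_2^2 = (q-1)n\|x\|_2^2$, and for $x \in \Sigma_L$ we have $\|x\|_2^2 \le L$ since $x$ is a $0/1$ vector with at most $L$ nonzero entries. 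Combining, $\EE\|\Phi x\|_1 \le n(q-1)\sqrt{L}$, and dividing through by $L$ is exactly the claimed bound.

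If one prefers a self-contained argument that does not cite \eqref{eq:expect}, the same estimate comes out row by row. Let $S \subseteq \F_q^k$ be the support of $x$; then the row $\Arow_{t,\alpha}$ of $\Phi$ satisfies $\Arow_{t,\alpha}\cdot x = \sum_{j\in S}\omega^{\alpha\ip{t}{j}}$, and the $n(q-1)$ rows are independent. By linearity of expectation, $\EE\|\Phi x\|_1 = \sum_{t\in T}\sum_{\alpha=1}^{q-1}\EE\,\inabs{\sum_{j\in S}\omega^{\alpha\ip{t}{j}}}$, and for each row Jensen bounds the inner expectation by the square root of $\EE\,\inabs{\sum_{j\in S}\omega^{\alpha\ip{t}{j}}}^2 = \sum_{j,j'\in S}\EE\,\omega^{\alpha\ip{t}{j-j'}}$. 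The cross terms vanish because $\ip{t}{j-j'}$ is uniform on $\F_q$ when $t$ is uniform on $\F_q^k$ and $j\neq j'$, while $\sum_{\beta\in\F_q}\omega^{\alpha\beta}=0$ for $\alpha\in\{1,\dots,q-1\}$; the diagonal terms (where $j=j'$) each contribute $1$, for a total of $|S|\le L$. So every row contributes at most $\sqrt{L}$, and summing over the $n(q-1)$ rows gives the same $n(q-1)\sqrt{L}$ bound.

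I do not expect a genuine obstacle in proving this lemma; it is essentially a one-line consequence of \eqref{eq:expect} together with Cauchy--Schwarz and Jensen. The only points requiring a small amount of care are that $\Sigma_L$ may contain vectors of support strictly smaller than $L$ (which only improves the bound), and the bookkeeping of the ambient dimension $n(q-1)$ in the Cauchy--Schwarz step. It is worth noting that this is the step where the proof gives up a factor: the inequality $\|\cdot\|_1 \le \sqrt{n(q-1)}\,\|\cdot\|_2$ is order-tight here precisely because $\Phi x$ is delocalized (each of its coordinates has magnitude $O(\sqrt{L})$ in expectation), so the resulting bound $n(q-1)/\sqrt{L}$ is of the right order and will be strong enough once combined in \eqref{eq:decomp} with a concentration estimate on the fluctuation term.
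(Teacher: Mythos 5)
Your first argument is exactly the paper's proof: Cauchy--Schwarz in $\C^{n(q-1)}$, Jensen, and the second-moment identity \eqref{eq:expect} with $\|x\|_2^2 \le L$, so the proposal is correct and takes essentially the same route. The self-contained row-by-row computation you add is also sound (it just re-derives \eqref{eq:expect} via the character sum $\sum_{\beta\in\F_q}\omega^{\alpha\beta}=0$), but it is not needed beyond what the paper already does.
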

\begin{proof}
The proof is a straighforward consequence of \eqref{eq:expect}.  For any $x \in \Sigma_L$, we have
\begin{align*}
\EE \|\Phi x\|_1 &\leq \sqrt{n(q-1)} \EE \|\Phi x\|_2\\
&\leq \sqrt{n(q-1)} \inparen{ \EE \|\Phi x\|_2^2 }^{1/2} \\
&= n(q-1)\sqrt{L}
\end{align*}
using \eqref{eq:expect} and the fact that $\|x\|_2 = \sqrt{L}$.
\end{proof}

Next, we control the deviation of $\|\Phi x\|_1$ from $\EE \|\Phi x\|_1$, uniformly over $x \in \Sigma_L$.
We do not require the vectors $t_j$ be drawn uniformly at random anymore, so long as they are selected independently.
\begin{lemma}\label{lem:concentrate}
Let $\mathcal{C} \subset \F_q^n$ be $q$-ary linear code, so that the columns $t_1, \ldots, t_n$ of the generator matrix are independent.  
Then
\[ \frac{1}{L}\EE \max_{x \in \Sigma_L} \inabs{ \| \Phi x \|_1 - \EE \| \Phi x\|_1 } \leq \Cproblemma (q-1)\sqrt{n\ln(N)}\]
with probability $1 - 1/\poly(N)$, for an absolute constant $\Cproblemma$.
\end{lemma}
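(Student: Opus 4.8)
The plan is to control $\max_{x\in\Sigma_L}\inabs{\|\Phi x\|_1 - \EE\|\Phi x\|_1}$ first in expectation and then in high probability. Write $\|\Phi x\|_1 = \sum_{t\in T}\|A_t x\|_1$, where $A_t\in\C^{(q-1)\times N}$ is the block of rows $\Arow_{t,\alpha}$, $\alpha\in\{1,\dots,q-1\}$, of $\Phi$ attached to a column $t$ of the generator matrix; every entry of $A_t$ has modulus one, and the blocks $A_t$ for distinct $t\in T$ are independent. By Lemma~\ref{lem:randexp} the expectation term in \eqref{eq:decomp} is already of the right size, so everything reduces to the centered supremum.

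To bound $\EE\max_{x\in\Sigma_L}\inabs{\|\Phi x\|_1 - \EE\|\Phi x\|_1}$, I would first symmetrize, obtaining $2\,\EE\,\EE_\eps\max_x\inabs{\sum_t\eps_t\|A_t x\|_1}$ with $\eps_t$ independent signs. The key step is to linearize: each $x\mapsto A_t x$ is linear and $v\mapsto\|v\|_1$ on $\C^{q-1}$ is $\sqrt{q-1}$-Lipschitz for the Euclidean metric, so the vector-valued contraction principle (Maurey's lemma, the vector form of the Ledoux--Talagrand contraction principle) gives
\[ \EE_\eps\max_{x\in\Sigma_L}\inabs{\sum_t\eps_t\|A_t x\|_1} \;\le\; C\sqrt{q-1}\;\EE_\xi\max_{x\in\Sigma_L}\inabs{\mathrm{Re}\,\ip{\Phi^{*}\xi}{x}}, \]
where $\xi\in\C^{n(q-1)}$ has independent symmetric coordinates of modulus $O(1)$ (signs or Gaussians). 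The process on the right is linear in $x$, and since $\Sigma_L$ consists of $0/1$ vectors of weight $L$, $\max_{x\in\Sigma_L}\mathrm{Re}\,\ip{\Phi^{*}\xi}{x}$ is simply the sum of the $L$ largest coordinates of $\mathrm{Re}(\Phi^{*}\xi)$, hence at most $L\,\max_i\inabs{(\Phi^{*}\xi)_i}$. For \emph{every} fixed generator matrix, each $(\Phi^{*}\xi)_i$ is a sum of $n(q-1)$ independent terms of modulus $O(1)$ (the $\xi_{t,\alpha}$ times unit-modulus entries of $\Phi$), hence sub-Gaussian with parameter $O(\sqrt{n(q-1)})$; a union bound over the $N$ coordinates gives $\EE_\xi\max_i\inabs{(\Phi^{*}\xi)_i}\le C'\sqrt{n(q-1)\ln N}$. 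Combining the three estimates,
\[ \EE\max_{x\in\Sigma_L}\inabs{\|\Phi x\|_1 - \EE\|\Phi x\|_1}\;\le\;\Ctmp\,(q-1)\,L\,\sqrt{n\ln N}, \]
which is the claimed bound after dividing by $L$. The two factors of $\sqrt{q-1}$ — one from the Lipschitz constant of $\|\cdot\|_1$, one from the length $n(q-1)$ of the sums defining $\Phi^{*}\xi$ — multiply to a single $q-1$, which is the source of the improved $q$-dependence.

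For the high-probability form I would apply the bounded-differences inequality to $F(t_1,\dots,t_n):=\max_{x\in\Sigma_L}\inabs{\|\Phi x\|_1-\EE\|\Phi x\|_1}$: replacing one column $t_j$ changes $\|A_{t_j}x\|_1\in[0,(q-1)L]$ for each $x$, and hence changes $F$ by at most $(q-1)L$, so $\PR{F\ge\EE F+s}\le\exp(-2s^2/(n(q-1)^2L^2))$. Taking $s$ a suitable multiple of $(q-1)L\sqrt{n\ln N}$ makes this $1/\poly(N)$, and together with the expectation bound this proves the lemma. Note the argument uses only that the columns $t_j$ are independent and that $\Phi$ has unit-modulus entries — never that the $t_j$ are uniform — which is exactly the generality asserted in the statement.

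The crux is the linearization. Attacking the nonlinear functionals $x\mapsto\|A_tx\|_1$ head-on — by a union bound over $\Sigma_L$ with a Bernstein bound for $\|\Phi x\|_1$, or by generic chaining with its (mixed sub-Gaussian / sub-exponential) increments — loses a factor of $\sqrt L=\Theta(1/\eps)$, because the only obvious almost-sure bound $\|A_tx\|_1\le(q-1)L$ overshoots the typical value $(q-1)\sqrt L$ by $\sqrt L$ and is genuinely attained when $\mathrm{supp}(x)$ is additively structured (for instance collinear). The contraction principle removes this in one stroke: once the process is linear, $\Sigma_L$ enters only through the elementary ``sum of the $L$ largest coordinates'' functional, for which no chaining is needed and the optimal $\eps$- and $q$-dependence is immediate.
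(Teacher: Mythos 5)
Your proof is correct, and it reaches the same bound by the same broad strategy as the paper (symmetrize, linearize away the nonlinearity of $\|\cdot\|_1$ by a contraction principle, use $\Sigma_L \subset L B_1^N$ to reduce the supremum to the $N$ columns, and bound a maximum of $N$ subgaussian variables by $\sqrt{n\ln N}$), but the two technical pillars differ. For the linearization, you invoke the vector-valued contraction inequality of Maurer with the $\sqrt{q-1}$-Lipschitz map $v \mapsto \|v\|_1$ on $\C^{q-1}$, whereas the paper stays with the classical scalar tools from Ledoux--Talagrand: it first bounds the sum over $\alpha$ by $(q-1)\max_{\alpha}$ and then drops the moduli $|\ip{\Arow_{t,\alpha}}{x}|$ via the scalar contraction principle (Cor.~3.17 of~\cite{lt}), paying the same net factor of $q-1$. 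For the tail, you bound the expectation of the supremum and then apply McDiarmid's bounded-differences inequality (with increments $(q-1)L$ per column), while the paper never passes through the expectation: it bounds the $p$-th moment of the supremum directly (via Lemma~\ref{lem:variance}) and applies Markov with $p = \ln N$. Your route buys a cleaner high-probability statement (genuine concentration around the mean, with exponent you can tune) and avoids the moment computation entirely; the paper's route uses only textbook results and sidesteps the need for a separate concentration inequality, and its moment bookkeeping is what absorbs the constant-factor losses. One point to tighten if you write this out in full: Maurer's inequality is stated for suprema without absolute values and for real-valued coordinates, so you need to split $A_t x$ into real and imaginary parts and reinstate the outer absolute value by a standard symmetry argument; this costs only constant factors and is exactly the kind of bookkeeping the paper hides in its factor $4^p$, but it should be said. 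With that caveat your argument proves precisely the statement that is used in the proof of Theorem~\ref{thm:mainthm} (a high-probability bound on $\frac{1}{L}\max_{x \in \Sigma_L}\inabs{\|\Phi x\|_1 - \EE\|\Phi x\|_1}$), and it uses only independence of the columns and unit-modulus entries, matching the generality claimed in the remark after the lemma.
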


\begin{remark}
As noted above, we do not make any assumptions on the distribution of the vectors $t_1, \ldots, t_n$, other than that they are chosen independently.  In fact, we do not even require the code to be linear---it is enough for the vectors $v_i = (c(i))_{c \in \mathcal{C}} \in [q]^N$ to be independent.  However, as we only consider linear codes in this work, we stick with our statement in order to keep the notation consistent.
\end{remark}

As a warm-up to the proof, which involves a few too many symbols, consider first the case when $q = 2$, and suppose that we wish to succeed with constant probability.  Then the rows $\Arow_t$ of $\Phi$ are rows of the Hadamard matrix, chosen independently.
By standard symmetrization and comparison arguments (made precise below), it suffices to bound
\begin{align*}
\frac{1}{L} \EE \max_{x \in \Sigma_L} \sum_{t \in T}  g_t \ip{\Arow_t}{x}  
& = \frac{1}{L} \EE \max_{x \in \Sigma_L}  \ip{g}{\Phi x}  \\
&\leq 
\EE \max_{x \in B^N_1} \ip{ g }{ \Phi x } \\
& = \EE \max_{y \in \Phi B_1^N}  \ip{g}{y},
\end{align*}
where above $g = (g_1, g_2, \ldots, g_n)$ is a vector of i.i.d. standard normal random variables, and $B_1^N$ denotes the $\ell_1$ ball in $\R^N$.
The last line is the mean width of $\Phi B_1^N$, which is a polytope contained in the convex hull of $\pm \vphi(c)$ for $c \in \mathcal{C}$, (that is, the columns of $\Phi$ and their opposites).  
So, using estimates for Gaussian random variables~\cite[Eq. (3.13)]{lt}, 
\begin{align*}
\EE \max_{y \in \Phi B_1^N} \ip{g}{y} &= \EE \max_{c \in \mathcal{C}} \ip{g}{\vphi(c)} \\
&\leq 3 \sqrt{\log|\mathcal{C}|} \sqrt{ \EE \ip{g}{\vphi(c)}^2 } \\
&= 3\|c\|_2 \sqrt{\log(N)} \\
&= 3\sqrt{n \log(N)}
\end{align*}
which is what we wanted.

For general $q$ and failure probability $o(1)$, there is slightly more notation, but the proof idea is the same.
We will need the following bound on moments of maxima of Gaussian random variables:
\begin{lemma}\label{lem:variance}
Let $X_1, \ldots, X_N$ be standard normal random variables (not necessarily independent).  Then
\[ \inparen{\EE \max_{i \leq N} |X_i|^p}^{1/p} \leq \Cvar N^{1/p} \sqrt{p} \]
for some absolute constant $\Cvar$.
\end{lemma}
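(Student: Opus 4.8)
The plan is to dominate the maximum by a sum and then reduce to a one–dimensional moment estimate for a single Gaussian; crucially, only the marginal distributions of the $X_i$ are used, so the lack of independence costs us nothing.

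First I would bound
\[
\EE \max_{i \le N} |X_i|^p \;\le\; \sum_{i=1}^{N} \EE |X_i|^p \;=\; N\,\EE |Z|^p,
\]
where $Z$ is a standard normal random variable, using only that each $X_i$ has the same law as $Z$. Taking $p$-th roots yields $\inparen{\EE \max_{i \le N}|X_i|^p}^{1/p} \le N^{1/p}\inparen{\EE|Z|^p}^{1/p}$, so it remains to prove $\inparen{\EE|Z|^p}^{1/p} \le \Cvar\sqrt{p}$ for an absolute constant $\Cvar$ and all $p \ge 1$.

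For this single–Gaussian moment bound I would combine the standard subgaussian tail estimate $\PR{|Z| \ge t} \le 2e^{-t^2/2}$ with the layer–cake formula
\[
\EE|Z|^p = \int_0^\infty p\,t^{p-1}\,\PR{|Z|\ge t}\,dt \;\le\; 2p\int_0^\infty t^{p-1}e^{-t^2/2}\,dt .
\]
The substitution $u = t^2/2$ turns the last integral into $2^{p/2-1}\Gamma(p/2)$, so $\EE|Z|^p \le p\,2^{p/2}\,\Gamma(p/2)$; by Stirling's approximation $\Gamma(p/2)^{1/p} = O(\sqrt p)$ while $p^{1/p} = O(1)$, hence $\inparen{\EE|Z|^p}^{1/p} = O(\sqrt p)$, as desired. (Equivalently one may invoke the exact identity $\EE|Z|^p = 2^{p/2}\Gamma\!\inparen{\tfrac{p+1}{2}}/\sqrt\pi$ and apply Stirling directly.)

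There is no real obstacle here; the only mild point requiring care is pinning down the absolute constant in $\inparen{\EE|Z|^p}^{1/p}\le \Cvar\sqrt p$, which is a routine consequence of the bound $\Gamma(x)\le C\,x^{x-1/2}e^{-x}$, and since the paper makes no attempt to optimize constants, any fixed admissible value of $\Cvar$ works.
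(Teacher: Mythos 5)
Your proof is correct and follows essentially the same route as the paper's: both arguments reduce the problem to the marginal subgaussian tail of a single standard normal and incur a factor of $N$ by union-bounding over the $N$ variables, arriving at the same intermediate quantity $Np\,2^{p/2}\Gamma(p/2)$. The only organizational difference is that you bound $\EE\max_i|X_i|^p \le \sum_i \EE|X_i|^p$ directly and then estimate a single Gaussian moment, whereas the paper applies the union bound at the tail level (to $\PR{\max_i|X_i|>s}$) and integrates; your version is marginally cleaner since it avoids the extra additive term the paper picks up by restricting the tail bound to $s\ge 1$.
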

\begin{proof} 
Let $Z = \max_{i \leq N} |X_i|$.  
Then
\[ \PR{ Z > s } \leq N \exp( -s^2/2  )\]
for $s \geq 1$. 
Integrating,
\begin{align*}
\EE |Z|^p &= \int \PR{ Z^p > s}\,ds \\
&= \int \PR{ Z^p >  t^p } p t^{p-1}\,dt\\
&\leq 1 + N \int_1^\infty \exp(-t^2/2) pt^{p-1}\,dt\\
&\leq 1 + Np 2^{p/2} \Gamma(p/2)\\
&\leq 1 + (Np) \left( p^{p/2}\right).
\end{align*}
Thus, 
\[ \inparen{\EE |Z|^p}^{1/p} \leq \Cvar N^{1/p} \sqrt{p}.\]
for some absolute constant $\Cvar$.
\end{proof}

Now we may prove the lemma.
\begin{proof}[of Lemma \ref{lem:concentrate}]
We recall the notation from the facts in Section \ref{sec:setup}: the rows of $\Phi$ are $\Arow_{t,\alpha}$ for $t \in T$, where $T$ is a random multiset of size $n$ with elements chosen independently from $\F_q^d$, and $\alpha \in \F_q^*$.


To control the largest deviation of $\|\Phi x\|_1$ from its expectation, we will control the $p^{th}$ moments of this deviation---eventually we will choose $p \sim \ln(N)$.  By a symmetrization argument followed by a comparison principle (Lemma 6.3 and Equation (4.8), respectively,  in~\cite{lt}),
for any $p \geq 1$,
\begin{align}
& \EE \max_{x \in \Sigma_L} | \|\Phi x\|_1 - \EE \|\Phi x\|_1 |^p \notag \\
& \qquad = \EE \max_{x \in \Sigma_L} \inabs{ \sum_{t \in T} \sum_{\alpha \in \F_q^*} \left( |\ip{\Arow_{t,\alpha}}{x}| - \EE|\ip{\Arow_{t,\alpha}}{x}|  \right) }^p
\notag\\
&\qquad \leq \Clt \EE_T  \EE_g  \max_{x \in \Sigma_L} \left| \sum_{t \in T} g_t \sum_{\alpha \in \F_q^*} |\ip{\Arow_{t,\alpha}}{x}| \right|^p 
\notag\\
&\qquad \leq \Clt \EE_T \EE_g \max_{x \in \Sigma_L} \left| (q-1) \max_{\alpha \in \F_q^*} \sum_{t \in T} g_t |\ip{\Arow_{t,\alpha}}{x}| \right|^p 
\notag\\
&\qquad \leq \Clt 4^p(q-1)^p \EE_T \EE_g \max_{x \in \Sigma_L} \max_{\alpha \in \F_q^*} \left| \sum_{t \in T} g_t \ip{\Arow_{t,\alpha}}{x} \right|^p ,
\label{eq:leftoff}
\end{align}
where the $g_t$ are i.i.d. standard normal random variables, and we dropped the absolute values 
at the cost of a factor of four by a contraction principle (see Cor. 3.17 in~\cite{lt}).
Above, we used the independence of the vectors $\Arow_{t,\alpha}$ for a fixed $\alpha$ to apply the symmetrization.

For fixed $\alpha$, let $\Phi_\alpha$ denote $\Phi$ restricted to the rows $\Arow_{t, \alpha}$ that are indexed by $\alpha$.  
Similarly, for a column $\vphi(c)$ of $\Phi$, let $\vphi(c)_\alpha$ denote the restriction of that column to the rows indexed by $\alpha$.
Conditioning on $T$ and fixing $\alpha \in \F_q^*$, let
\[X(x,\alpha) := \sum_{t \in T} g_t \ip{\Arow_{t, \alpha}}{x} = \ip{ g }{ \Phi_\alpha x }. \]
Let $B_1^N$ denote the $\ell_1$ ball in $\R^N$.  
Since $\Sigma_L \subset LB_1^N$, we have 
\[\Phi_\alpha (\Sigma_L) \subset L \Phi_\alpha(B_1^N) = \conv\{ \pm L \vphi(c)_\alpha \suchthat c \in \mathcal{C} \}.\] 
Thus, we have 
\begin{align}
&\EE_g \max_{x \in \Sigma_L} \max_{ \alpha \in \F_q^*} |X(x,\alpha)|^p \notag \\
&\qquad = \EE_g \max_{y \in \Phi_\alpha \Sigma_L}\max_{\alpha \in \F_q^*} |\ip{g}{y}|^p \notag\\
&\qquad \leq L^p\, \EE_g \max_{ \pm c \in \mathcal{C}  } \max_{\alpha \in \F_q^*} |\ip{g}{\vphi(c)_\alpha}|^p \label{eq:leftoff2},
\end{align}
using the fact that $\max_{x \in \conv(S)} F(x) = \max_{x \in S} F(x)$ for any convex function $F$.
Using Lemma \ref{lem:variance}, and the fact that $\ip{g}{\vphi(c)_\alpha}$ is Gaussian with variance $\|\vphi(c)_\alpha\|_2^2 = n$,
\begin{align}
&L^p\, \EE_g \max_{\pm c \in \mathcal{C}} \max_{\alpha \in \F_q^*} |\ip{g}{\vphi(c)_\alpha}|^p \notag\\
&\qquad \qquad \leq \inparen{ \Cvar\, L\, \sqrt{np} (2N(q-1))^{1/p} }^p. \label{eq:leftoff3}
\end{align}

Together, \eqref{eq:leftoff}, \eqref{eq:leftoff2}, and \eqref{eq:leftoff3} imply
\begin{align*}
&\EE \max_{x \in \Sigma_L} | \|\Phi x\|_1 - \EE \|\Phi x\|_1 |^p\\
&\qquad \leq \Clt 4^p(q-1)^p \EE_T \left(  \Cvar L \sqrt{np} (2N(q-1))^{1/p}\right)^p \\
&\qquad \leq \left( 4 \Clt^{1/p} \Cvar (q-1)^{(1 + 1/p)} L \sqrt{np} (2N)^{1/p} \right)^p\\
&\qquad=: Q(p)^p.
\end{align*}
Finally,  we set $p = \ln(N)$, so we have
\[ Q(\ln(N)) \leq \Ctmp (q-1) L\sqrt{n \ln(N)},\]
for an another constant $\Ctmp$.
Then Markov's inequality implies
\[ \PR{ \max_{x \in \Sigma_L} |\|\Phi x\|_1 - \EE \|\Phi x\|_1 | > e Q(\ln(N)) } \leq \frac{1}{N}.\]
We conclude that with probability at least $1 - o(1)$, 
\[ \frac{1}{L} \max_{x \in \Sigma_L} \inabs{ \|\Phi x\|_1 - \EE \|\Phi x\|_1 } \leq \Cproblemma(q-1)\sqrt{n \ln(N) },\]
for $\Cproblemma = e\Ctmp$.
\end{proof}
Now we may prove Theorem \ref{thm:mainthm}.
\begin{proof}[of Theorem \ref{thm:mainthm}]
Lemmas \ref{lem:randexp} and \ref{lem:concentrate}, along with \eqref{eq:decomp}, imply that
\[ \frac{1}{L} \max_{x \in \Sigma_L} \|\Phi x\|_1 \leq \frac{n(q-1)}{\sqrt{L}} + \Cproblemma(q-1)\sqrt{n\ln(N)} \]
with probability $1 - o(1)$.  Thus, if
\begin{equation}\label{eq:triplestar}
	(q-1) \left( \frac{n}{\sqrt{L}} + \Cproblemma\sqrt{n\ln(N)}\right)  < (q-1)n\eps
\end{equation}
holds, the condition \eqref{eq:doublestar} also holds with probability $1 - o(1)$.
Setting $L = \left(\nicefrac{2}{\eps}\right)^2$ and $n = \frac{4 \Cproblemma^2 \ln(N)}{\eps^2}$ 
satisfies \eqref{eq:triplestar},
so Lemma \ref{lem:sufficient} implies that $\mathcal{C}$ is $(\bigrho, 4/\eps^2)$-list decodable,
with $k$ equal to
\[ \log_q(N)  =  \frac{n \eps^2}{(2\Cproblemma)^2 \ln(q) }.\] 
With the remarks from Section \ref{sec:setup} following the definition of random linear codes, this concludes the proof.
\end{proof}

\section{Generalizations}
\label{sec:gen}
In this section, we show that our approach above applies not just to random linear codes, but to many ensembles. 
In our proof of Theorem \ref{thm:mainthm}, 
we required only that the expectation of $\|\Phi x\|_1$ be about right, and that the columns of the generator matrix were chosen independently, so that Lemma \ref{lem:concentrate} implies concentration.  The fact that $\|\Phi x\|_1$ was about right followed from the condition \eqref{eq:expect}, which required that, within sets $\Lambda \subset \mathcal{C}$ of size $L$, the average pairwise distance is, in expectation, large.  We formalize this observation in the following lemma, which can be substituted for Lemma \ref{lem:randexp}.
\begin{lemma}\label{lem:avgdist}
Let $\mathcal{C} = \{c_1,\ldots, c_N\} \subset [q]^n$ be a (not necessarily uniformly) random code so that for any $\Lambda \subset [N]$ with $|\Lambda| = L$, 
\begin{equation}\label{eq:reallyneed}
 \frac{1}{{L \choose 2}} \EE \sum_{ i < j \in \Lambda } d(c_i, c_j) \geq \inparen{1 - \nicefrac{1}{q}}\inparen{ 1 - \eta }.
\end{equation}
Then for all $x \in \Sigma_L$,
\[ \frac{1}{L} \EE \| \vphi(\mathcal{C})x \|_1 \leq n(q-1) \sqrt{ \frac{1}{L} + \frac{2 \eta {L \choose 2}}{L^2} }.\]
\end{lemma}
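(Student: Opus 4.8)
The plan is to mirror the proof of Lemma~\ref{lem:randexp} exactly, replacing the one ingredient that genuinely used the uniformity of the generator matrix (the identity \eqref{eq:expect}) by a direct computation from the deterministic identity \eqref{eq:ip} together with the average‑distance hypothesis \eqref{eq:reallyneed}. Concretely, the proof will have two moves: pass from $\ell_1$ to $\ell_2$ by Cauchy--Schwarz and Jensen, then estimate $\EE\|\vphi(\mathcal{C})x\|_2^2$ by expanding in coordinates.

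First I would write, for any $x \in \Sigma_L$,
\[ \EE\|\vphi(\mathcal{C})x\|_1 \;\le\; \sqrt{n(q-1)}\,\EE\|\vphi(\mathcal{C})x\|_2 \;\le\; \sqrt{n(q-1)}\,\bigl(\EE\|\vphi(\mathcal{C})x\|_2^2\bigr)^{1/2}, \]
the first step being Cauchy--Schwarz in $\C^{n(q-1)}$ and the second Jensen's inequality. Next, setting $\Lambda=\supp(x)$ so that $x$ is the $0/1$ indicator of $\Lambda$ and $\|x\|_2^2=L$, I would expand
\[ \EE\|\vphi(\mathcal{C})x\|_2^2 \;=\; \sum_{i,j\in\Lambda}\EE\,\ip{\vphi(c_i)}{\vphi(c_j)} \;=\; \sum_{i,j\in\Lambda}\bigl((q-1)n - qn\,\EE d(c_i,c_j)\bigr), \]
using \eqref{eq:ip}. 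The $L$ diagonal terms contribute $L(q-1)n$ since $d(c_i,c_i)=0$; the off‑diagonal terms sum to $L(L-1)(q-1)n - qn\sum_{i\ne j\in\Lambda}\EE d(c_i,c_j)$, and $\sum_{i\ne j\in\Lambda}\EE d(c_i,c_j) = 2\sum_{i<j\in\Lambda}\EE d(c_i,c_j) \ge 2\binom{L}{2}(1-1/q)(1-\eta)$ by \eqref{eq:reallyneed}. Using the identity $q(1-1/q)=q-1$, the off‑diagonal contribution collapses to at most $(q-1)n\,L(L-1)\eta$, so
\[ \EE\|\vphi(\mathcal{C})x\|_2^2 \;\le\; (q-1)n\,L\bigl(1+(L-1)\eta\bigr). \]
Plugging this into the first display, dividing by $L$, and rewriting $(L-1)\eta/L = 2\eta\binom{L}{2}/L^2$ gives exactly the claimed bound $\frac{1}{L}\EE\|\vphi(\mathcal{C})x\|_1 \le n(q-1)\sqrt{\tfrac1L + \tfrac{2\eta\binom{L}{2}}{L^2}}$.

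There is no real obstacle here: the lemma is a bookkeeping variant of Lemma~\ref{lem:randexp} in which the exact orthogonality of columns in expectation is relaxed to near‑orthogonality with slack $\eta$. The only points needing care are (i) not invoking \eqref{eq:expect} directly, since its derivation assumed a uniform $G$, and instead recomputing $\EE\|\vphi(\mathcal{C})x\|_2^2$ from \eqref{eq:ip} and \eqref{eq:reallyneed}, and (ii) tracking the algebra so that the leading $(q-1)n\,L(L-1)$ terms cancel exactly, leaving only the order‑$\eta$ error term — which is what produces the $\frac{2\eta\binom{L}{2}}{L^2}$ correction inside the square root.
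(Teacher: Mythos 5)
Your proof is correct and follows essentially the same route as the paper: Cauchy--Schwarz plus Jensen to reduce to $\EE\|\vphi(\mathcal{C})x\|_2^2$, then an expansion via the identity \eqref{eq:ip}, a split into diagonal and off-diagonal terms, and the cancellation of the leading $L(L-1)(q-1)n$ terms using \eqref{eq:reallyneed}. You spell out the diagonal/off-diagonal bookkeeping slightly more explicitly than the paper does, but there is no substantive difference.
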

\begin{proof}
Fix $x \in \Sigma_L$, and let $\Lambda$ denote the support of $x$.  Then, using \eqref{eq:ip},
\begin{align*}
\frac{1}{L} \EE \| \vphi(\mathcal{C})x \|_1 &\leq \frac{ \sqrt{n(q-1)}}{L} \inparen{ \EE \| \vphi(\mathcal{C}) \|_2^2 }^{1/2}\\
&= \frac{ \sqrt{n(q-1)}}{L} \inparen{ \EE \sum_{i,j \in \Lambda} \ip{ \vphi(c_i)}{\vphi(c_j)}  }^{1/2}\\
&= \frac{ \sqrt{n(q-1)}}{L} \inparen{ \EE \sum_{i,j \in \Lambda} (q-1)n - qn\,d(c_i,c_j) }^{1/2}\\
&\leq \frac{ \sqrt{n(q-1)}}{L} \inparen{ L(q-1)n + 2{L \choose 2} n(q-1)\eta }^{1/2}\\
&= n(q-1) \sqrt{ \frac{1}{L} + \frac{2\eta {L \choose 2}}{L^2} },
\end{align*}
as claimed.
\end{proof}

Thus, we may prove a statement analogous to Theorem \ref{thm:mainthm} about any distribution on linear codes whose generator matix has independent columns, which satisfies \eqref{eq:reallyneed}. Where might we find such distributions?  Notice that if the expectation is removed, \eqref{eq:reallyneed} is precisely the hypothesis of the average case Johnson bound (Theorem 8 in~\cite{che:2012}), and so any code $\mathcal{C}$ to which the average case Johnson bound applies attains \eqref{eq:reallyneed}.  However, such a code $\mathcal{C}$ might have substantially suboptimal rate---we can improve the rate, and still satisfy \eqref{eq:reallyneed}, by forming generator matrix for a new code $\mathcal{E}$ from a random set of columns of the generator matrix of $\mathcal{C}$. 
\begin{definition}\label{def:subsample}
Fix a code $\mathcal{C} \subset [q]^{n'}$, and define an ensemble $\mathcal{E} = \mathcal{E}(\mathcal{C}) \subset [q]^{n}$ as follows.  To draw $\mathcal{E}$, choose a random multiset $T$ of size $n$ by drawing elements of $[n']$ independently with replacement.  Then let
\[ \mathcal{E} = \inset{ (x_{t_1}, \ldots, x_{t_{n}} ) \suchthat x \in \mathcal{C} }.\]
\end{definition}
\begin{remark}
We may think of the operation in Definition \ref{def:subsample} as randomly puncturing $\mathcal{C}$.  This is not quite correct, because the vectors $t_j$ are sampled with replacement, but it is correct in spirit.  In particular, all of the results that follow would hold if we retained each coordinate in $[n']$ independently with probability $n/n'$, and this would indeed be a punctured code, with \em expected \em length $n$.  Ignoring these technicalities, we will refer below to the codes of Definition \ref{def:subsample} as ``randomly punctured codes."
\end{remark}
Replacing Lemma \ref{lem:randexp} with Lemma \ref{lem:avgdist} in the proof of Theorem \ref{thm:mainthm}
immediately 
implies that randomly punctured codes
are list decodable with high probability, if the original code $\mathcal{C}$ has good average distance.  
\begin{cor}\label{cor:subsample}
Let $\mathcal{C} = \{c_1,\ldots,c_N\} \subset \F_q^{n'}$ be any linear code with 
\[ \frac{1}{{L \choose 2}}  \sum_{i < j \in \Lambda} d(c_i,c_j) \geq \inparen{1 - \frac{1}{q} }(1 - \eta) \]
for all sets $\Lambda \subset [N]$ of size $L$.  
Set
\[\eps^2 := 4\inparen{ \frac{1}{L}  + \eta\inparen{ 1 - \frac{1}{L} } }.\]
There is some $R = \Omega(\eps^2)$ so that if
 $\mathcal{E} = \mathcal{E}(\mathcal{C})$ is as in Definition \ref{def:subsample} with rate $R$, 
then $\mathcal{E}$ is $\inparen{ \bigrho, L-1 }$-list decodable with probability $1 - o(1)$.
\end{cor}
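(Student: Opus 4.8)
The plan is to replay the proof of Theorem~\ref{thm:mainthm} essentially verbatim, making the single substitution of Lemma~\ref{lem:avgdist} for Lemma~\ref{lem:randexp} as the bound on $\EE\|\Phi x\|_1$. The first step is to check that the random code $\mathcal{E}=\mathcal{E}(\mathcal{C})$ of Definition~\ref{def:subsample} satisfies hypothesis~\eqref{eq:reallyneed} of Lemma~\ref{lem:avgdist} with the $\eta$ from the statement. This is immediate from linearity of expectation together with the fact that each sampled coordinate is uniform on $[n']$: for a fixed pair $i\neq j$, if $c_i',c_j'$ denote the images of $c_i,c_j$ in $\mathcal{E}$, then $\EE_T\, d(c_i',c_j') = d(c_i,c_j)$, so averaging over the $\binom{L}{2}$ pairs in any $\Lambda\subset[N]$ of size $L$ shows that the expected average pairwise distance of $\mathcal{E}$ on $\Lambda$ equals the deterministic average pairwise distance of $\mathcal{C}$ on $\Lambda$, which by hypothesis is at least $\inparen{1-\nicefrac1q}(1-\eta)$.

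With~\eqref{eq:reallyneed} in hand, Lemma~\ref{lem:avgdist} gives, for every $x\in\Sigma_L$,
\[ \frac{1}{L}\EE\|\vphi(\mathcal{E})x\|_1 \;\leq\; n(q-1)\sqrt{\frac1L+\frac{2\eta\binom L2}{L^2}} \;=\; n(q-1)\sqrt{\frac1L+\eta\inparen{1-\frac1L}} \;=\; \frac{\eps}{2}\,n(q-1), \]
by the choice $\eps^2 = 4\inparen{\frac1L+\eta\inparen{1-\frac1L}}$. For the deviation term in~\eqref{eq:decomp}, observe that the generator matrix of $\mathcal{E}$ is obtained by sampling the columns of a generator matrix of $\mathcal{C}$ with replacement, so its columns are independent; hence Lemma~\ref{lem:concentrate} applies unchanged and yields
\[ \frac{1}{L}\max_{x\in\Sigma_L}\inabs{\|\vphi(\mathcal{E})x\|_1-\EE\|\vphi(\mathcal{E})x\|_1} \;\leq\; \Cproblemma(q-1)\sqrt{n\ln N} \]
with probability $1-1/\poly(N)$. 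Plugging both estimates into~\eqref{eq:decomp} gives
\[ \frac{1}{L}\max_{x\in\Sigma_L}\|\vphi(\mathcal{E})x\|_1 \;\leq\; (q-1)\inparen{\frac{\eps}{2}\,n + \Cproblemma\sqrt{n\ln N}} \]
with probability $1-o(1)$.

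It then remains to pick the length $n$ --- equivalently the rate $R=\log_q N/n$ of $\mathcal{E}$ --- small enough that the right-hand side is strictly below $(q-1)n\eps$. This amounts to $\Cproblemma\sqrt{n\ln N}<\frac{\eps}{2}n$, i.e.\ $n>4\Cproblemma^2\ln(N)/\eps^2$, i.e.\ $R<\eps^2/\inparen{4\Cproblemma^2\ln q}$; so there is an $R=\Omega(\eps^2)$ for which~\eqref{eq:doublestar} of Lemma~\ref{lem:sufficient} holds with probability $1-o(1)$ (the $o(1)$ being the $1/\poly(N)$ failure probability of Lemma~\ref{lem:concentrate}, which tends to $0$ as $N\to\infty$). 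Lemma~\ref{lem:sufficient} then shows that $\mathcal{E}$ is $\inparen{\bigrho,L-1}$-list decodable, as claimed.

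I do not anticipate a genuine obstacle: the corollary is set up precisely so that its only substantive ingredient, the uniform concentration of $\|\Phi x\|_1$, is already furnished by Lemma~\ref{lem:concentrate} under exactly the hypothesis (independent columns) that a randomly punctured code enjoys. The one place that calls for a little care is the first step --- verifying that with-replacement subsampling preserves the expected average distance --- and once that is in place, the rest is the same bookkeeping as in the proof of Theorem~\ref{thm:mainthm}.
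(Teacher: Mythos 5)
Your proposal is correct and matches the paper's own (largely implicit) argument: the paper proves this corollary exactly by substituting Lemma \ref{lem:avgdist} for Lemma \ref{lem:randexp} in the proof of Theorem \ref{thm:mainthm}, and your additional checks---that with-replacement subsampling preserves the expected average pairwise distance so that \eqref{eq:reallyneed} holds, and that the sampled coordinates are independent so Lemma \ref{lem:concentrate} applies---are precisely the details the paper leaves to the reader.
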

Theorem 8 in~\cite{che:2012} implies that if $\mathcal{C}$ is as in the statement of Corollary \ref{cor:subsample}, then $\mathcal{C}$ itself is $\inparen{ \bigrho, O(1/\eps^2)}$-list decodable, for $\eps$ as above.  
Thus, Corollary \ref{cor:subsample} implies that $\mathcal{E}(\mathcal{C})$ has the same list decodability properties as $\mathcal{C}$, but perhaps a much better rate.

As a example of this construction, consider the family of (binary) degree $r$ Reed-Muller codes, $\RM(r,m) \subset \F_2^m$.
$\RM(r,m)$ can be viewed as the set of degree $r$, $m$-variate polynomials over $\F_2$. It is easily checked that $\RM(r,m)$ is a linear code of dimension $k = 1 + {m \choose 1} + {m \choose 2} + \cdots + {m \choose r}$ and minimum relative distance $2^{-r}$.  
The resulting ensemble $\mathcal{E} = \mathcal{E}(\RM(r,m))$ is a natural class of codes: decoding $\mathcal{E}$ is equivalent to learning a degree $r$ polynomial over $\F_2^m$ from random samples, in the presence of (worst case) noise. 

We cannot hope for short list sizes in this case, but we can hope for nontrivial ones.  Kaufman, Lovett, and Porat~\cite{klp} have given tight asymptotic bounds on the list sizes for $\RM(r,m)$ for all radii, and in particular have shown that $\RM(r,m)$ is list decodable up to $1/2 - \eps$ with list sizes on the order of $\eps^{\Theta_r(m^{r-1})}$.  As $|\RM(r,m)|$ is exponential in $m^r$, this is a nontrivial bound.  We will show that randomly punctured Reed-Muller codes, with rate $\Omega(\eps^2)$, have basically the same list decoding parameters as their un-punctured progenitors.
\begin{proposition} Let $\mathcal{E} = \mathcal{E}(\RM(r,m))$ be as in Definition \ref{def:subsample}, with rate $O(\eps^2)$.
Then $\mathcal{E}$ is $\inparen{\nicefrac{1}{2}(1-\eps), L(\eps)}$-list decodable with probability $1 - o(1)$, where
\[ L(\eps) = \inparen{ \frac{1}{\eps} }^{O_r(m^{r-1})},\]
where $O_r$ hides constants depending only on $r$.
\end{proposition}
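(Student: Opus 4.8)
The plan is to apply Corollary~\ref{cor:subsample} with $\mathcal{C} = \RM(r,m)$. Since $\RM(r,m)$ is a binary linear code of length $n' = 2^m$, here $q=2$ and $\inparen{1-\nicefrac{1}{q}}\inparen{1-\eps} = \nicefrac{1}{2}(1-\eps)$, so the conclusion of that corollary already has the shape we want. The one thing we must supply is the corollary's hypothesis: a lower bound, for every $L$-subset $\Lambda$ of the codewords, on the average pairwise distance $\frac{1}{\binom{L}{2}}\sum_{i<j\in\Lambda} d(c_i,c_j)$. This is where the Kaufman--Lovett--Porat list-size bound enters, and I would extract what I need from it through a converse to the average-case Johnson bound, namely: if a binary linear code is $\inparen{\nicefrac{1}{2}-\delta,\ell}$-list decodable, then every set $\Lambda$ of $L$ of its codewords satisfies $\frac{1}{\binom{L}{2}}\sum_{i<j\in\Lambda} d(c_i,c_j) \ge \nicefrac{1}{2}\cdot\frac{(L-\ell)(1-2\delta)}{L-1}$.

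To prove this sublemma, let $w\in\{0,1\}^{n'}$ be the coordinatewise majority of $\Lambda$, and for each coordinate $k$ let $b_k\le L/2$ be the number of codewords in $\Lambda$ that disagree with $w$ at coordinate $k$. Counting disagreeing pairs one coordinate at a time gives $\sum_{i<j\in\Lambda} d(c_i,c_j) = \frac{1}{n'}\sum_k b_k(L-b_k) \ge \frac{L}{2}\cdot\frac{1}{n'}\sum_k b_k = \frac{L}{2}\sum_{c\in\Lambda} d(c,w)$, where the inequality uses $L-b_k\ge L/2$ and the last step is just the definition of $d(c,w)$. List decodability forces at most $\ell$ of the codewords in $\Lambda$ to lie within relative distance $\nicefrac{1}{2}-\delta$ of $w$, so $\sum_{c\in\Lambda} d(c,w) \ge (L-\ell)\inparen{\nicefrac{1}{2}-\delta}$; dividing by $\binom{L}{2} = \tfrac{1}{2}L(L-1)$ gives the stated bound.

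Next I would instantiate the sublemma with $\mathcal{C}=\RM(r,m)$ and the bound $\ell(\delta) = (1/\delta)^{O_r(m^{r-1})}$. Given the target $\eps$ (taken small enough), set $\delta := \eps^2/24$ and $L := \lceil 12\,\ell(\delta)/\eps^2\rceil$, and let $\eta$ be defined by $1-\eta := \frac{(L-\ell(\delta))(1-2\delta)}{L-1}$, so that the sublemma supplies the hypothesis of Corollary~\ref{cor:subsample} with this $\eta$. A short estimate shows $\delta \le \eta \le \ell(\delta)/L + 2\delta$, hence $4\inparen{\frac{1}{L} + \eta\inparen{1-\frac{1}{L}}} \le \frac{4}{L} + \frac{4\ell(\delta)}{L} + 8\delta \le \eps^2$ and also $\ge 2\delta = \eps^2/12$; thus the parameter produced by the corollary is some $\eps'$ with $\nicefrac{\eps}{\sqrt{12}} \le \eps' \le \eps$, and the rate it guarantees is $\Theta((\eps')^2) = \Theta(\eps^2)$. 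Moreover $\ell(\delta) = (24/\eps^2)^{O_r(m^{r-1})} = (1/\eps)^{O_r(m^{r-1})}$ for $\eps$ small, and passing from $\ell(\delta)$ to $L = \lceil 12\ell(\delta)/\eps^2\rceil$ only shifts the implied exponent by an additive constant (harmless, since $m^{r-1}\ge 1$), so $L = (1/\eps)^{O_r(m^{r-1})}$. Corollary~\ref{cor:subsample} then says that at this rate $\Theta(\eps^2)$ the ensemble $\mathcal{E}(\RM(r,m))$ is $\inparen{\nicefrac{1}{2}(1-\eps'),\,L-1}$-list decodable with probability $1-o(1)$; since $\eps'\le\eps$, every ball of radius $\nicefrac{1}{2}(1-\eps)$ is contained in a ball of radius $\nicefrac{1}{2}(1-\eps')$, so the same ensemble is $\inparen{\nicefrac{1}{2}(1-\eps),\,L-1}$-list decodable. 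This is the proposition, with $L(\eps) := L-1 = (1/\eps)^{O_r(m^{r-1})}$.

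The only real obstacle is the converse Johnson-type sublemma in the second paragraph --- getting from ``list decodable'' back to ``large average pairwise distance'' with parameters clean enough to feed into Corollary~\ref{cor:subsample}. The coordinatewise-majority trick is the natural route and costs only constant factors, which is precisely the slack the statement allows. Everything else --- the dimension of $\RM(r,m)$, the Kaufman--Lovett--Porat list-size bound, and the bookkeeping in choosing $\delta$ and $L$ as functions of $\eps$ --- is routine.
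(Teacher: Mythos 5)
Your proposal is correct, but it reaches the hypothesis of Corollary \ref{cor:subsample} by a genuinely different route than the paper. The paper never goes through list decodability of $\RM(r,m)$ at all: it lets $A(\eps)$ be the number of codewords of relative weight at most $\nicefrac{1}{2}(1-\eps^2)$ (the weight-distribution bound from Kaufman--Lovett--Porat), and uses \emph{linearity} of $\RM(r,m)$ to say that each codeword in a set $\Lambda$ of size $L=A/\eps^2$ has at most $A-1$ neighbors within distance $\nicefrac{1}{2}(1-\eps^2)$, which gives the average-distance bound \eqref{eq:reallyneed} with $\eta=O(\eps^2)$ in three lines (this is the computation borrowed from Lemma 6 of~\cite{che:2012}). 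You instead start from the KLP \emph{list-decodability} statement and prove a converse average-case Johnson bound via the coordinatewise-majority word: the identity $\sum_{i<j\in\Lambda} d(c_i,c_j)=\frac{1}{n'}\sum_k b_k(L-b_k)$, the bound $L-b_k\ge L/2$, and the fact that at most $\ell$ codewords of $\Lambda$ lie within $\nicefrac{1}{2}-\delta$ of $w$ are all correct, and your parameter bookkeeping ($\delta=\eps^2/24$, $L=\lceil 12\ell(\delta)/\eps^2\rceil$, $\eta\le \ell/L+2\delta$, monotonicity in the radius) checks out, yielding the same list-size shape $(1/\eps)^{O_r(m^{r-1})}$. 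What your route buys is generality: it needs no linearity and converts \emph{any} binary code's list-decodability into the average-distance hypothesis, so puncturing preserves list decodability for a wider class; what it costs is the extra sublemma and the fact that the pair-counting step $b_k(L-b_k)$ is specific to the binary alphabet (for $q$-ary codes one would need a different count), whereas the paper's weight-distribution argument is shorter and works verbatim for any linear code once the low-weight codeword count is known.
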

\begin{proof}
We aim to find $\eta$ so that \eqref{eq:reallyneed} is satisfied.  As usual, let $N = |\RM(r,m)|$.
We borrow a computation from the proof of Lemma 6 in~\cite{che:2012}.  
Let $A = A(\eps)$ be the number of codewords of $\RM(r,m)$ with relative weight at most $\nicefrac{1}{2}(1 - \eps^2)$.  
Let $L = A/\eps^2$ and choose a set $\Lambda \subset [N]$ of size $L$.  By linearity, for each codeword $c_i$ with $i \in \Lambda$, there are at most $A - 1$ codewords $c_j$ within $\nicefrac{1}{2}(1 - \eps^2)$ of $c_i$, out of $L - 1$ choices for $c_j$.  Thus, the sum of the relative distances over $j \neq i$ is at most $(L - A) \cdot \nicefrac{1}{2}(1 - \eps^2)$.  This implies
\begin{align*}
\frac{1}{{L \choose 2}} \sum_{i < j \in \Lambda} d(c_i,c_j) &\geq \frac{L - A}{L - 1} \inparen{ \frac{1}{2} (1-\eps^2)}\\
&= \inparen{ 1 - \frac{A - 1}{ L - 1} }\inparen{  \frac{1}{2}(1 - \eps^2)}\\
&\geq \frac{1}{2} \inparen{ 1 - \eps^2 - \frac{A-1}{L -1 } }\\
&= \frac{1}{2} \inparen{ 1 - O(\eps^2) },
\end{align*}
using the choice $L = A/\eps^2$ in the final line.
Thus, in Corollary \ref{cor:subsample}, we may take $\eta = O(\eps^2)$.

We conclude that the randomly punctured code $\mathcal{E}(\RM(r,m))$ of rate $O(\eps^2)$ is 
$\inparen{\nicefrac{1}{2}(1-\eps), L -1}$ list decodable, 
with list size $L$ on the order of $A/\eps^2$.  
It remains to estimate $A = A(\eps)$.
It is shown in~\cite{klp} that 
\[ A = A(\eps) = \inparen{\frac{1}{\eps}}^{\Theta_r(m^{r-1})},\]
which finishes the proof.
\end{proof}


Another popular ensemble of linear codes is the \em Wozencraft ensemble \em \cite{j72,w73}, which encodes an element $x \in \F_{q^k}$ as $(x, \alpha_1 x, \alpha_2 x, \ldots, \alpha_r x)$ for uniformly random $\alpha_j \in \F_{2^k}$.  In this case, the symbols within a codeword are not all independent, so Lemma \ref{lem:avgdist} does not apply.  However, the techniques above extend immediately to imply that a code from this ensemble (with $r \sim k/\eps^2)$ is $\inparen{\bigrho, O(1/\eps)}$-list decodable with rate $\eps^2/k$.  (Previously, the only known result about the list decodability of the Wozencraft ensemble follows from the Johnson bound, which implies a rate on the order of $\eps^4$ for the same radius, so for very small $\eps$ this is better).  It would be interesting to see if this argument could be modified to obtain constant rate for the Wozencraft ensemble, or for other ensembles of linear codes.

\section{Conclusion}
We have shown that a random linear code of rate $\Omega\inparen{\frac{\eps^2}{\log(q)}}$ is $\inparen{ \bigrho, O(1/\eps)}$-list decodable with probability $1 - o(1)$.  
Our result improves the results of~\cite{che:2012} in three ways.  First, we remove the logarithmic dependence on $\eps$ in the rate, achieving the optimal  dependence on $\eps$.  Second, it improves the dependence on $q$, from $1/\log^4(q)$ to $1/\log(q)$.  Finally, we show that list decodability holds with probability $1 - o(1)$, rather than with constant probability.  
Our result is the first to establish the existence of optimally list decodable $q$-ary linear codes for this parameter regime for general $q$.
As an added benefit, our proof is relatively short and straightforward.
To illustrate the applicability of our argument, we showed that in fact our techniques apply to many ensembles of random codes, including randomly punctured codes.  As an example, we considered Reed-Muller codes, and showed that they retain their combinatorial list decoding properties with high probability when randomly punctured down to constant rate. 

\section*{Acknowledgements}
I thank Atri Rudra and Martin Strauss for very helpful conversations.

\bibliographystyle{alpha}
\bibliography{codes.bib}
\end{document}